\title{Two Roads to Classicality}
\author{Bob Coecke\thanks{Supported by AFOSR grant Algorithmic and Logical Aspects when Composing Meanings.}
    \institute{University of Oxford}
    \email{bob.coecke@cs.ox.ac.uk}
  \and John Selby\thanks{Supported by the EPSRC through the Controlled Quantum Dynamics Centre for Doctoral Training.}
    \institute{University of Oxford \& Imperial College London}
    \email{john.selby08@imperial.ac.uk}
  \and Sean Tull  \thanks{Supported by EPSRC Studentship OUCL/2014/SET.}
    \institute{University of Oxford}
        \email{sean.tull@cs.ox.ac.uk}
}
\newcommand{\sclr}{s}
\newcommand{\QuantPure}{\cat{Quant}_{\text{pure}}}
\newcommand{\QuantMixed}{\cat{Quant}} 
\newcommand{\CStar}{\cat{CStar}}
\newcommand{\leakminC}{\catC_L}
\newcommand{\ClassProb}{\cat{Class}}
\newcommand{\Rel}{\cat{Rel}}
\newcommand{\FRel}{\cat{FRel}}
\newcommand{\boolB}{\mathbb{B}}
\newcommand{\Modalp}{\cat{Modal}_p}
\newcommand{\genU}[1]{{#1}_{\langle U \rangle}}
\newcommand{\catCU}{\genU{\catC}}
\newcommand{\FHilb}{\cat{FHilb}}
\newcommand{\CMon}{\cat{CMon}}
\newcommand{\MatR}{\Mat_R}
\newcommand{\Mat}{\cat{Mat}}
\let\olddagger\dagger
\renewcommand{\dagger}{\ensuremath{\olddagger}\xspace}
\newcommand{\pure}{\text{pure}}
\newcommand{\op}{\text{op}}
\newcommand{\id}[1]{\ensuremath{\mathrm{id}_{#1}}}
\newcommand{\cat}[1]{\ensuremath{\mathbf{#1}}\xspace}
\newcommand{\coproj}{\kappa}
\newcommand{\Split}{\mathsf{Split}}
\newcommand{\Splitorcausal}{\Split^{(\hspace{-.5pt}\ttinyground\hspace{-1.5pt})}\hspace{-1pt}}
\newcommand{\Splitcausal}{\Split^{\ttinyground}\hspace{-1pt}}
\newcommand{\Splitdagcausal}{\Split_{\hspace{0.6pt}\dagger}^{\ttinyground}\hspace{-1pt}}
\newcommand{\hilbH}{\mathcal{H}} 
\newcommand{\hilbK}{\mathcal{K}} 
\newcommand{\CPM}{\mathsf{CPM}\xspace}
\newcommand{\CPs}{\mathsf{CP}^*\xspace}
\newcommand{\pdeff}[1]{\bar{#1}}
\newcommand{\pds}{a}
\newcommand{\pda}{a}
\newcommand{\pdb}{b}
\newcommand{\pdc}{c}
\def\th@plain{%
  \thm@notefont{}
  \itshape 
}
\def\th@definition{%
  \thm@notefont{}
  \normalfont 
}
\tikzset{
    >=stealth',
    punkt/.style={
           rectangle,
           rounded corners,
           draw=black, 
           text width=6.5em,
           minimum height=2em,
           text centered},
    pil/.style={
           ->,
           shorten <=2pt,
           shorten >=2pt,},
     incl/.style={
           left hook->,
           shorten <=2pt,
           shorten >=2pt,},
     incl2/.style={
           right hook->,
           shorten <=2pt,
           shorten >=2pt,},
}
\newcommand{\discard}[1]{\ensuremath{\ \tinyground_{#1}}}
\tikzstyle{dot}=[circle, draw=black, fill=black!25, inner sep=.4ex]
\tikzstyle{black_dot}=[dot, fill=black!50]
\tikzstyle{white_dot}=[dot, fill=white]
\newif\ifvflip\pgfkeys{/tikz/vflip/.is if=vflip}
\newif\ifhflip\pgfkeys{/tikz/hflip/.is if=hflip}
\newif\ifhvflip\pgfkeys{/tikz/hvflip/.is if=hvflip}
\newlength\morphismheight
\newlength\wedgewidth
\tikzset{width/.initial=1mm}
\tikzstyle{morphism}=[font=\small,morphismshape]
\newenvironment{pic}[1][]
{\begin{aligned}\begin{tikzpicture}[font=\tiny,#1]}
{\end{tikzpicture}\end{aligned}}
\tikzstyle{dot}=[circle, draw=black, fill=black!20, inner sep=.4ex, node on layer=foreground]
\tikzstyle{whitedot}=[circle, draw=black, fill=white, inner sep=.4ex, node on layer=foreground]
\tikzstyle{greydot}=[circle, draw=black, fill=black!20, inner sep=.4ex, node on layer=foreground]
\tikzstyle{darkgreydot}=[circle, draw=black, fill=black!50, inner sep=.4ex, node on layer=foreground]
\tikzstyle{blackdot}=[circle, draw=black, fill=black, inner sep=.4ex, node on layer=foreground]
\tikzstyle{triangle} = [regular polygon, regular polygon sides=3, draw=black, fill=black!20,scale=0.3, node on layer=foreground]
\tikzstyle{whitetriangle}=[triangle, fill=white]
\tikzstyle{greytriangle}=[triangle, fill=black!20]
\tikzstyle{darkgreytriangle}=[triangle, fill=black!50]
\tikzstyle{blacktriangle}=[triangle, fill=black]
\tikzstyle{invertedtriangle} = [triangle,scale=-1]
\tikzstyle{whiteinvertedtriangle}=[invertedtriangle, fill=white]
\tikzstyle{greyinvertedtriangle}=[invertedtriangle, fill=black!20]
\tikzstyle{darkgreyinvertedtriangle}=[invertedtriangle, fill=black!50]
\tikzstyle{blackinvertedtriangle}=[invertedtriangle, fill=black]
\tikzstyle{morphism}=[font=\small,morphismshape]
\tikzstyle{box}=[rectangle,inner sep=.4ex, draw=black, node on layer=foreground]
\tikzstyle{mor}=[rectangle,inner sep=.4ex, draw=black, node on layer=foreground, minimum width = 0.6cm, minimum height = 0.4cm, font = \small]
\tikzstyle{morwide}=[mor, minimum width = 1.2cm]
\newif\ifvflip\pgfkeys{/tikz/vflip/.is if=vflip}
\newif\ifhflip\pgfkeys{/tikz/hflip/.is if=hflip}
\newif\ifhvflip\pgfkeys{/tikz/hvflip/.is if=hvflip}
\tikzset{width/.initial=1mm}
\newlength\minimummorphismwidth
\newlength\stateheight
\newlength\minimumstatewidth
\newlength\connectheight
\tikzset{colour/.initial=white}
\tikzstyle{mixed}=[line width=.7pt]
\tikzstyle{pure}=[line width=.7pt]
\tikzset{diredge/.style={decoration={
  markings,
  mark=at position 0.525 with {\arrow{#1}}},postaction={decorate}}}
\tikzset{
    diredge/.default=>
}
\tikzset{diredgestart/.style={decoration={
  markings,
  mark=at position 4pt with {\arrow{#1}}},postaction={decorate}}}
\tikzset{
    diredgestart/.default=<
}
\tikzset{diredgeend/.style={decoration={
  markings,
  mark=at position 1 with {\arrow{#1}}},postaction={decorate}}}
\tikzset{
    diredgeend/.default=>
}
\tikzstyle{groundwide}=[groundNEW, minimum width = 1.2cm]
\newcommand{\tinyground}[1][groundNEW]{
\smash{\raisebox{-2pt}{\hspace{-3pt}\ensuremath{\begin{pic}[scale=0.4]
    \node[#1, scale=0.6] (1) at (0,0.4) {};
    \draw [pure] (1.south) to +(0,-.3);
\end{pic}
}\hspace{-1pt}}}}
\newcommand{\ttinyground}[1][groundNEWsmall]{
\smash{\raisebox{-0.5pt}{\hspace{-1pt}\ensuremath{\begin{pic}[scale=0.2]
    \node[#1, xscale=0.25,yscale=0.3] (1) at (0,0.4) {};
    \draw [pure] (1.south) to +(0,-.3);
\end{pic}
}\hspace{-0pt}}}}
\theoremstyle{definition}
\newtheorem{theorem}{Theorem}[section]
\newtheorem{corollary}[theorem]{Corollary}
\newtheorem{lemma}[theorem]{Lemma}
\newtheorem{proposition}[theorem]{Proposition}
\newtheorem{definition}[theorem]{Definition}
\newtheorem{example}[theorem]{Example}
\newtheorem{examples}[theorem]{Examples}
\newtheorem{example*}[theorem]{Example*}
\newtheorem{examples*}[theorem]{Examples*}
\newtheorem{remark}[theorem]{Remark}
\newtheorem{remark*}[theorem]{Remark*}
\def\bR{\begin{color}{red}}
\def\bB{\begin{color}{blue}}
\def\bM{\begin{color}{magenta}}
\def\bC{\begin{color}{cyan}}
\def\bW{\begin{color}{white}}
\def\bBl{\begin{color}{black}}
\def\bG{\begin{color}{green}}
\def\bY{\begin{color}{yellow}}
\def\e{\end{color}\xspace}
\newcommand{\bit}{\begin{itemize}}
\newcommand{\eit}{\end{itemize}\par\noindent}
\newcommand{\ben}{\begin{enumerate}}
\newcommand{\een}{\end{enumerate}\par\noindent}
\newcommand{\beq}{\begin{equation}}
\newcommand{\eeq}{\end{equation}\par\noindent}
\newcommand{\beqa}{\begin{eqnarray*}}
\newcommand{\eeqa}{\end{eqnarray*}\par\noindent}
\newcommand{\beqn}{\begin{eqnarray}}
\newcommand{\eeqn}{\end{eqnarray}\par\noindent}
\newcommand{\seanignore}[1]{} 
\newcommand{\johnignore}[1]{} 
\begin{document}

\maketitle

\begin{abstract}
Mixing and decoherence are both manifestations of classicality within quantum theory, each of which admit a very general category-theoretic construction. We show under which conditions these two `roads to classicality' coincide. This is indeed the case for (finite-dimensional) quantum theory, where each construction yields the category of C*-algebras and completely positive maps. We present counterexamples where the property fails which includes relational and modal theories. Finally, we provide a new interpretation for our category-theoretic generalisation of decoherence in terms of `leaking information'.
\end{abstract}

Physical, computational, and many other theories can very generally be described by (monoidal) categories. Examples include categorical logic \cite{LambekScott}, categorical programming language semantics \cite{asperti1991categories}, and more recently,  categorical quantum mechanics~\cite{AC1}. More specifically, we think of any (monoidal) category as a candidate theory of physical systems (objects) and processes (morphisms).
When viewing morphisms as quantum processes, two universal constructions provide roads to classical physics, allowing one to build new systems to describe classical data, respectively embodying a generalisation of \emph{mixing} and of \emph{decoherence}.

Firstly, one may represent mixing in a category $\catC$ by
means of sum-enrichment. This is justified by the fact that, when $\catC$ is monoidal, endomorphisms of the tensor unit (i.e.~\em scalars\em) allow one to then form weighted mixtures of morphisms, generalising the probabilistic mixtures appearing in information theory. When $\catC$ is sum-enriched, we may apply a universal construction, the \emph{biproduct completion} $\catC^{\oplus}$~\cite{MacLane} to generate classical set-like systems, with the biproduct playing the role of the set-union.

The second road to classicality, decoherence, is given in the quantum formalism by an idempotent, causal operation that sets all off-diagonal entries of a density matrix to $0$. Causality may be discussed in any category $\catC$ coming with suitable `discarding' morphisms, and we generalise decoherence to any causal idempotent in such a category.\seanignore{
} Applying our next universal construction, (a variant of) the \emph{Karoubi Envelope} $\Splitcausal(\catC)$ (splitting of idempotents or \emph{ Cauchy completion}) \cite{borceux1986cauchy} generates systems equipped with such a `decoherence' map which `classicise' their processes.

In this article, we investigate when there is an embedding $\catC^{\oplus} \to \Splitcausal(\catC)$ (Theorem~\ref{thm:SplitHasBiprod}) and when it is an equivalence (Corollary~\ref{cor:IdemBiprodEquiv}), showing that this is the case for quantum theory (Corollary~\ref{cor:quantum_main_result}). In this case one recovers not only classical theory, but also intermediate systems described by C*-algebras, and remarkably C*-algebras only.
In its most concrete form, our main result may be stated as follows: in the category of finite-dimensional C*-algebras, all causal idempotents split.

Correspondences between these different manners of encoding classicality within categorical quantum mechanics were already studied by Heunen, Kissinger and Selinger~\cite{EPTCS171.7}. Our result strengthens and greatly generalises theirs by removing the assumption that the idempotents are self-adjoint. Abstractly, this allows our approach to apply beyond the `dagger compact categories' considered there to arbitrary categories without a dagger, or in fact even monoidal structure. Nonetheless, it is a straightforward corollary that our result respects the monoidal structure when present. While the passage from self-adjoint idempotents to general idempotents might seem minor, it is precisely this relaxation that allows for a clear interpretation.

There was a third road in \cite{EPTCS171.7} sandwiched between the two others, corresponding to the use of `dagger Frobenius algebras' as a generalisation of C*-algebras. This leads to another interesting interpretation of our result: as already hinted at above, a structure much weaker than the full-blown axiomatization suffices to capture all finite dimensional C*-algebras. Moreover this weaker structure has a very clear interpretation as resulting from leakage of information into the environment, see Section~\ref{sec:leaks}. Two extremes are the fully quantum C*-algebras with minimal leakage, and the fully classical C*-algebras with maximal leakage. Hence, from a physical perspective, the additional structure of C*-algebras is merely an artefact of the Hilbert space representation.

This work is related to and draws on several earlier works. In particular, Corollary~\ref{cor:IdemBiprodEquiv} draws on a result of Blume-Kohout et al.~\cite{blume2010information}, and our approach can be seen as a generalisation of that of Selinger and collaborators~\cite{SelingerIdempotent,EPTCS171.7} as discussed in depth in Section~\ref{sec:Daggers}.

\seanignore{
\seanignore{
This paper is concerned with categorical structures for representing theories of physical processes, and in fact, more general theories of processes.  Whilst the specific result that we present takes its inspiration from the field of \em categorical quantum mechanics \em (a research area that is concerned with representing fundamental structures of quantum foundations and quantum computing in category-theoretic terms, see e.g.~\cite{AC1}, \cite{CDKZ}, \cite{vicaryhigher}, \cite{VicaryAlgs}, \cite{Amar}, \cite{RossKevin} and the book \cite{CKBook}) our technical results make no use of many of the structures typically relied upon within this field and hence has much broader applicability.

The technical core of this paper is both a strengthening and generalisation of a theorem by Heunen, Kissinger and Selinger \cite{EPTCS171.7}, which concerns exploring correspondences between different manners of encoding classicality within categorical quantum mechanics, respectively by means of biproducts  \cite{SelingerCPM}, certain Frobenius algebras \cite{CPav,CPaqPav,CPstar}, and dagger idempotents \cite{SelingerIdempotent}.
Our strengthening/generalisation now allows for a clear physical interpretation which applies to any theory described by a category.
}

Our starting point is the fact that physical, computational, and many other theories can very generally be described by (monoidal) categories. Examples include categorical logic \cite{LambekScott}, categorical programming language semantics \cite{asperti1991categories}, and more recently,  categorical quantum mechanics; this paper follows the spirit of these process-based conceptions of categories. More specifically, we think of any (monoidal) category as a candidate theory of physical systems (cf.~objects) and processes (cf.~morphisms).

When viewing morphisms as quantum processes, two universal constructions provide two roads to classicality, respectively embodying a generalisation of mixing and of decoherence.  Firstly, it's easily seen that one may represent mixing in a category $\catC$ by means of sum-enrichment. Then, when $\catC$ is monoidal, endomorphisms $\sclr_i \colon I \to I$ of the tensor unit (i.e.~\em scalars\em) allow one to form weighted mixtures of morphism generalising the probabilistic mixtures appearing in information theory.

On the other hand, decoherence, which in the quantum formalism is an idempotent that sets all off-diagonal entries of a density matrix to $0$, is generalised to any idempotent. That this generalisation makes sense follows from the fact that any idempotent can arise from leaking some information into the environment \cite{QCILeaks};  we formally introduce these \em leaks \em in Section \ref{sec:leaks}. In the case of quantum theory, by considering more general  idempotents, one not only recovers classical theory, but also intermediate ones described by C*-algebras,  and, remarkably, C*-algebras only.

Each of these manifestations of classicality can be used to construct new systems to describe classical data. These are perfectly embodied by two standard universal constructions for categories $\catC$, namely the \em biproduct completion \em $\catC^{\oplus}$ \cite{MacLane} and {\color{blue} a variant of the} \em Karoubi envelope \em $\Split(\catC)$ \cite{borceux1986cauchy} (i.e.~splitting of all idempotents or  \em Cauchy completion\em).
The biproduct completion --- corresponding to the case of mixing --- generates classical set-like systems with the biproduct playing the role of the set-union, whilst the Karoubi envelope --- corresponding to the case of decoherence --- generates systems equipped with a decoherence map which `classicise' their processes. {\color{blue}Note however, that we must restrict this construction slightly to ensure that the resulting decoherence maps are `physical/causal', we denote this as $\Splitcausal(\catC)$.}
We investigate when there is an embedding $\catC^{\oplus} \to \Splitcausal(\catC)$ (Theorem~\ref{thm:SplitHasBiprod}) and when it is an equivalence (Corollary~\ref{cor:IdemBiprodEquiv}), showing that this is the case for quantum theory (Corollary~\ref{cor:quantum_main_result}).

In establishing this result, we strengthen the theorem in \cite{EPTCS171.7} by removing the assumption that the idempotents are self-adjoint, and generalise their result beyond the generalised C*-algebraic setting  of \cite{VicaryCstar} (i.e.~certain Frobenius algebras in dagger compact categories). Therefore we do not need to assume compact nor dagger structure, and in fact, not even monoidal structure;  it is a straightforward corollary that our result respects the monoidal structure when present. While the passage from self-adjoint idempotents to general idempotents might seem minor, it is precisely this relaxation that allows for a clear interpretation.

There was a third road in \cite{EPTCS171.7} sandwiched between the two others, directly corresponding to the aforementioned generalised C*-algebras.
This sandwich of course still exists in our case, and this leads to another interesting interpretation of our result. Namely that, as already hinted at above, a structure much weaker than the full-blown axiomatization suffices to capture all finite dimensional C*-algebras. Moreover this weaker structure has a very clear interpretation as resulting from leakage (cf.~Section \ref{sec:leaks}); two extremes are the fully quantum C*-algebras with minimal leakage, and the fully classical C*-algebras with maximal leakage. Hence, from a physical perspective, the additional structure of C*-algebras is merely an artefact of the Hilbert space representation.

This work is related to and draws on several earlier works. In particular, Corollary~\ref{cor:IdemBiprodEquiv} draws on a result of Blume-Kohout et al.~\cite{blume2010information}, and our approach can be seen as a generalisation of that of Selinger and collaborators~\cite{SelingerIdempotent,EPTCS171.7} as discussed in depth in Section~\ref{sec:Daggers}.
}

\section{Setup}

Physical theories can be described as categories where we think of the objects in a category as (physical) \em systems\em, of the morphisms as \em processes\em, and $\circ$ as sequential composition of these processes\footnote{
Indeed we will use the terms category/morphism/object and theory/process/system interchangeably throughout.
}.
We often call a process an \emph{event} when we think of it as forming a part of a probabilistic process (e.g.~the occurrence of a particular outcome in a quantum measurement).

The categories we consider here will typically come with a chosen object to represent `nothing', denoted $I$. We call morphisms $a \colon I\to A$ \em states\em, $e \colon A\to I$ \em effects\em, and $\sclr \colon I \to I$ \em scalars\em. Admittedly, this terminology is slightly abusive unless we take $I$ to be the tensor unit in a monoidal category $(\catC, \otimes, I)$.  In such a case, where we think of the tensor product $\otimes$ as \em parallel composition \em of processes, we can introduce or remove the `nothing'-object at will via the natural isomorphisms
$\lambda_A \colon I \otimes A \to A$ and $\rho_A \colon A \otimes I \to A$.
\seanignore{
It is only then that states/effects acquire their usual meaning, and scalars behave as expected, in particular, inducing a scalar multiplication (see below).
}
While most of our example categories are monoidal, our results do not require any monoidal structure, though importantly they are compatible with any which is present (see Remark~\ref{rem:MonoidalStructure}.

\seanignore{
however, our main results will not require $I$ to be the monoidal unit, and holds for general categories. But importantly, they are entirely compatible with the monoidal structure should it exist (see Section~\ref{Sec:MonoidalStructure}).
}
\seanignore{
For a complete categorical description of quantum theory -- including how collections of events form more general processes -- see \cite{CKBook}.
}

\begin{example}[Pure quantum events]\label{Ex:pureQuantum}
These can be described by a category $\QuantPure$ where the objects are finite-dimensional Hilbert spaces $\hilbH$ and morphisms are linear maps identified up to a global phase (i.e. $f\sim g \iff f=e^{i\theta} g \text{ for some } \theta$). The monoidal product is the standard Hilbert space tensor product and the monoidal unit is $\mathbb{C}$. States are therefore vectors in $\hilbH$ up to a phase, and effects dual-vectors again up to a phase. Scalars are complex numbers up to a phase, i.e.~positive real numbers. \seanignore{, which we think of as probabilistic weights if $\leq 1$.}
\end{example}

Two important structures are lacking in this theory of pure quantum events, namely our two roads to classicality: \emph{discarding} and \emph{mixing}.

\seanignore{
}
\begin{definition}
A \emph{category with discarding} $(\catC, \discard{})$ is a category coming with a chosen object $I$ and family of morphisms $\discard{A} \colon A \to I$, with $\discard{I} = \id{I}$.
\seanignore{
}
A morphism $f \colon A \to B$ is \emph{causal} when
$\discard{B} \circ f = \discard{A}$.
\end{definition}

The reason why the term causality is justified is that when restricting to causal processes, a (monoidal) theory is non-signalling \cite{Cnonsig}.
Hence compatibility with relativity theory boils down to the requirement that all effects are causal and hence equal to the discarding effect; this notion of causality was first introduced in \cite{Chiri1}. However, in this form causality has a very lucid interpretation:
whether we discard an object before or after applying some process is irrelevant, either way the result is the same --- the object ends up discarded \cite{CKpaperI, CKBook}.
This guarantees that processes outside the direct surroundings of an experiment
can be ignored, and hence is vital to even be able to perform any kind of scientific experiment without, for example, intervention from another galaxy. We now move on to generalised mixing.

\begin{definition}
A category $\catC$ is \emph{semi-additive} when it is enriched in the category $\cat{CMon}$ of commutative monoids. That is, each homset forms a commutative
monoid  $(\catC(A,B), +, 0)$, with $+$ and $0$ preserved by composition. In particular, $\catC$ has a family of \emph{zero morphisms} $0 = 0_{A,B} \colon A \to B$, for all $A, B \in \catC$, satisfying $0 \circ f = 0 = 0 \circ g$ for all morphisms $f, g$.
\seanignore{
\[
\begin{tikzcd}[every label/.append style = {font = \small},column sep = large]
A \rar{0= 0_{A,B}} & B
\end{tikzcd}
\]
for $A, B \in \catC$, satisfying
\[
0 \circ f = 0\quad\qquad\mbox{and}\quad\qquad g \circ 0 = 0
\]
for all morphisms $f, g$.
}
We often write $\sum^n_{i=1} f_i$ for $f_1 + \dots + f_n$.
\end{definition}

\begin{remark}
We use the term mixing here as in quantum theory it is precisely this semi-additive structure that allows one to discuss probabilistic mixtures of processes. More generally, in any monoidal category where the scalars $s \colon I \to I$ can be interpreted as probabilities, we may discuss probabilistic weightings of processes by setting $\sclr \cdot f = \lambda_B \circ (s \otimes f) \circ {\lambda_A}^{-1} \colon A \to B$
, and hence probabilistic mixtures $\sum_i \sclr_i \cdot f_i$.
\seanignore{
of processes as:
\[
\sum_i \sclr_i \cdot f_i
\]
}
If $\sum_i \sclr_i=1$ then $\{\sclr_i\}$ is a interpreted as a normalised probability distribution, and any such mixing of causal processes will again be causal.
\end{remark}

\seanignore{
\begin{remark}\label{ex:non-causal}
Semi-additive structure most often requires non-causal processes.  Moreover, probabilistic weights themselves are non-causal, except for the unique causal scalar $1$. `Sub-causal' processes (see Section~\ref{sec:subcausal}) can be interpreted as those which do not happen with certainty, but occur as one potential outcome with some probability, while more general processes lack such an interpretation, but are added for mathematical convenience, namely, such that $+$ is a total function.
\end{remark}
}
We now consider the role that these two structures play in quantum theory.


\begin{example}[Quantum events]\label{ex:QuantMixed}
These can be modelled  as a category $\QuantMixed$ which has the same objects and monoidal product as $\QuantPure$. Note that the bounded operators on a Hilbert space $\hilbH$ define an ordered real vector space $\mathcal{B}(\hilbH)$  with associated positive cone $\mathcal{B}(\hilbH)^+$. Morphisms between these are then defined as linear order-preserving maps $f \colon \mathcal{B}(\hilbH)\to\mathcal{B}(\hilbH')$ which moreover are completely positive, meaning $f \otimes \id{\hilbK}$ preserves positivity of elements, for all objects $\hilbK$.
\seanignore{
\[
f\otimes \id{\hilbK} (\mathcal{B}(\hilbH\otimes \hilbK)^+) \subseteq \mathcal{B}(\hilbH'\otimes \hilbK)^+
\]
for all objects $\hilbK$. }
 The semi-additive structure is defined as the usual addition of linear maps.
States therefore correspond to $\rho \in \mathcal{B}(\hilbH)^+$ (i.e. are density matrices), effects can be written in terms of states via the trace inner product
$\langle\rho, \_\rangle=\text{tr}(\rho\_)$
(i.e.~are POVM elements) and scalars are positive real numbers. The discarding effect is given by $\text{tr}(\mathbb{I}\_)$ and so causal states have trace $1$ and general causal morphisms are trace preserving (i.e. are CPTP maps).
This category can also be defined in terms of Selinger's $\CPM$ construction~\cite{SelingerCPM}, which we will return to in Section~\ref{sec:Daggers}.
There is an embedding of $\QuantPure$ in $\QuantMixed$ sending a process $f \colon \hilbH \to \hilbH'$ to the map $f\circ - \circ f^\dagger:\mathcal{B}(\hilbH)\to \mathcal{B}(\hilbH')$.
\seanignore{
Processes $f\in\QuantPure(\hilbH, \hilbH')$ can be embedded in $\QuantMixed$ as:
\[
f\circ - \circ f^\dagger:\mathcal{B}(\hilbH)\to \mathcal{B}(\hilbH')
\]
}
\end{example}

\seanignore{
\begin{remark}
In addition to the case already made in Remark \ref{ex:non-causal} for considering non-causal processes, note also that the quantum events of Example \ref{ex:QuantMixed} typically won't be causal.  The causal ones are the trace-preserving maps, and hence, for example, there are no causal effects  \cite{CKpaperI, CKBook} aside from the discarding effect itself.  However, several non-causal events bunched together as a probabilistic process can be causal e.g.~quantum measurements.  It is therefore very useful to have non-causal processes around, since one does want to be able to talk about
`what just happened'.
\end{remark}
}

\begin{remark}
The relationship between $\QuantPure$ and $\QuantMixed$ as described in the above example can be viewed in two different ways. Firstly note that $\QuantPure$ is a subcategory of $\QuantMixed$ and then it is a standard result in quantum information that there are two equivalent ways to write any general quantum transformation $f \colon A\to B$ in terms of processes in $\QuantPure$, firstly, via the Kraus decomposition, as a sum of pure transformations, \johnignore{$\{a_k\colon A \to B\}$, i.e.,} $f=\sum_k a_k$, and secondly, via Stinespring dilation, as a pure process \johnignore{$g:A \to B\otimes C$} with an extra output \johnignore{$C$} which is discarded,
$f=\rho_B \circ (\id{B}\otimes \discard{C})\circ g$ where $\rho_B$ is the monoidal coherence isomorphism.
Therefore all of the processes of $\QuantMixed$ can be obtained from those in $\QuantPure$ either by means of the semi-additive or discarding structure.

This is an important feature of quantum theory, for example, \johnignore{the fact that any mixed state can be represented as a pure bipartite state where one part is discarded is known as the} in the form of the \emph{purification} postulate \cite{Chiri1} which has been used as an axiom in reconstructing quantum theory \cite{Chiri2}. More generally, the $\CPM$ construction provides a recipe for producing categories with discarding and a form of purification, see Section~\ref{sec:Daggers}. Conceptually, this provides both an `internal' and `external' view on the origins of general quantum transformations, which this paper develops with the two constructions.
\end{remark}

Here are some other example theories which serve as useful points of comparison for quantum theory.

\begin{example}[Probabilistic classical events]\label{ex:class_prob_theory}
These can be modelled in the category $\ClassProb$, objects are natural numbers $n$, and morphisms $f\colon n\to m$ are $n\times m$ matrices with positive real entries. The monoidal unit is $1$ and the monoidal product is $n\otimes m = nm$. Semi-additive structure is provided by the matrix sum. States are therefore column vectors with positive real elements, effects are row vectors and scalars are positive real numbers.  The discarding effects are those of the form $(1,1,...,1)$ such that causal states are normalised probability distributions over a finite set and general causal morphisms are stochastic matrices.
\end{example}

\begin{example}[Possibilistic classical events] \label{example:Rel}
These can be modelled in the symmetric monoidal category $\Rel$ of sets and relations. Here objects are sets $A, B, \dots$,  morphisms $R \colon A \to B$ are relations from $A$ to $B$, i.e.~subsets $R \subseteq A \times B$, sequential composition of relations $R \colon A \to B$ and $S \colon B \to C$ is defined by:
$
S \circ R = \{(a,c) \in A \times C \mid R(a,b) \wedge S(b,c) \}
$
and parallel composition is the set-theoretic product $A \otimes B = A \times B$, so that the monoidal unit is given by the singleton set $I = \{\star\}$. In particular, the scalars $\sclr \colon I \to I$ are the Booleans $\{0,1\}$. $\Rel$ has discarding, with $\discard{A} \colon A \to I$ given by the unique relation with $a \mapsto \star$ for all $a \in A$.
Hence, causal relations $R \colon A \to B$ are those satisfying $\forall a \ \exists b \ R(a,b)$.
This theory is semi-additive under the union of relations $R + S := R \vee S$.
\end{example}

\begin{example}[Modal quantum events]
The events in modal quantum theories \cite{schumacher2012modal,schumacher2016almost} can be modelled in the symmetric monoidal category where the objects are lattices of subspaces of finite dimensional vector spaces over a particular finite field $\mathbb{Z}_p$, where the choice of field defines a particular modal theory $\Modalp$. Morphisms are $\lor$ and $\perp$ preserving maps between these lattices. The monoidal product is inherited from the tensor product of the underlying vector spaces. The monoidal unit is the lattice of subspaces of a 1D vector space which gives two scalars $0$ and $1$ interpreted as impossible and possible respectively. This allows us to define zero-morphisms by $0_{A,B}(a) = 0$ for all $a$.
The discarding effect $\discard{A} \colon A\to I$ can then be defined by:
$\discard{A}\circ a=0 \iff a = 0_{IA}$
where $a \colon I\to A$. This is again a semi-additive category with $(f\lor g)(a) = f(a) \vee g(a)$ for $a \in A$.
In fact, $\Rel$ can be viewed as $\Modalp$ for the case `$p = 1$' in a certain precise sense~\cite{cohn2004projective}.
\seanignore{
define by:
\[
(f\lor g)\circ a := (f\circ a)\lor (g\circ a)
\]
for all $a \colon I\to A$.
}
\end{example}


\begin{example}
Any semi-ring $(R, +, 0, 1)$ forms a one object, semi-additive discard category, with $\discard{} = 1$.
\end{example}

\section{The Two Roads} \label{sec:tworoads}

We now introduce the two constructions which adjoin classicality to a theory. The first follows an external perspective, describing how one may build mixtures of the existing objects\johnignore{ of the theory}. The second follows the internal perspective, showing how classicality can emerge due to restrictions arising from decoherence.\johnignore{ We then derive when these two constructions coincide.}

\paragraph{The biproduct completion}

The first of these approaches is captured by a standard notion from category theory. Recall that, in any semi-additive category, a \emph{biproduct} of a finite collection $\{A_i\}^n_{i=1}$ of objects consists of an object $A= \bigoplus^n_{i=1} A_i$ and morphisms
$
\begin{tikzcd}
A_i \rar{\coproj_i} & \bigoplus^n_{i=1} A_i \rar{\pi_j} & A_j
\end{tikzcd}
$
satisfying:
\begin{eqnarray}
\label{eq:biprod_first}
\pi_i \circ \coproj_j = 0 \text{ for } i \neq j  \qquad & &\qquad
\pi_i \circ \coproj_i = \id{A_i}\\
\label{eq:biprod_sum}
 \sum^n_{i=1} \pi_i \circ \coproj_i &=&\id{A}
\end{eqnarray}

This makes $A$ both a product and coproduct of the objects $\{A_i\}^n_{i=1}$. An empty biproduct is the same as a \emph{zero object} --- an object $0$ which is both initial and terminal.


\begin{definition}
In a category with discarding a biproduct is \emph{causal} when its morphisms $\coproj_i$ are causal.
\end{definition}

In contrast, the structural morphisms $\pi_i$ will usually not be causal. Note that a semi-additive category has finite (causal) biproducts whenever it has a zero object and (causal) biproducts of pairs of objects.

\begin{examples}
$\ClassProb$ and $\Rel$ each have causal biproducts, given by the direct sum of vector spaces and disjoint union of sets, respectively. Any \em grounded biproduct category \em in the sense of Cho, Jacobs and Westerbaan ($\times2$) \cite{cho2015introduction} is a semi-additive category with discarding and causal biproducts.
\end{examples}

\johnignore{We may now describe the first universal construction of our study.}
\begin{definition}[Biproduct completion]
Any semi-additive category $\catC$ may be embedded universally into one with biproducts $\catC^{\oplus}$, its \emph{free biproduct completion} (see \cite[Exercise VIII.2.6]{MacLane}). The objects of $\catC^{\oplus}$ are finite lists $\langle A_1, \dots, A_n \rangle$ of objects from $\catC$, with the empty list forming a zero object. Morphisms $M \colon \langle A_1, \dots, A_n \rangle \to  \langle B_1, \dots, B_m \rangle$ are matrices of morphisms $\langle M_{i,j} \colon A_i \to B_j\rangle$ and composition is the usual one of matrices. The identity on $\langle A_1, \dots, A_n \rangle$ is the matrix with identities on its diagonal entries and zeroes elsewhere.
\end{definition}

There is a canonical full and faithful embedding of semi-additive categories $\catC \hookrightarrow \catC^{\oplus}$ given by $A \mapsto \langle A \rangle$. Moreover, this is universal in that any semi-additive functor $F \colon \catC \to \catE$ from $\catC$ to a semi-additive category $\catE$ with biproducts lifts to one $\hat{F} \colon \catC^{\oplus} \to \catE$, unique up to natural isomorphism.

\begin{example}
Generalising Example~\ref{ex:class_prob_theory}, the biproduct completion of a semi-ring $R$, seen as a one-object category, is the category $\MatR$ of $R$-valued matrices. In particular $\ClassProb = \Mat_{\mathbb{R^+}}$, while the category $\Mat_\boolB$ of Boolean valued matrices is equivalent to $\FRel$, the full subcategory of $\Rel$ on finite sets.
\end{example}

When $\catC$ is a category with discarding, $\catC^{\oplus}$ is also with
 $I = \langle I \rangle$ and $\discard{ \langle A_1, \dots, A_n \rangle} = \langle \discard{A_i} \colon A_i \to I \rangle^n_{i=1}$. To state its universal property, we will need the following notions.

\begin{definition}
A functor $F \colon (\catC, \discard{}) \to (\catD, \discard{})$ between categories with discarding is \emph{causal} when the morphism $\discard{F(I)} \colon F(I) \to I$ is an isomorphism and $F(\discard{A}) = \discard{F(I)} \circ F(\discard{A})$ for all objects $A$. A natural transformation is \emph{causal} when all of its components are.
\end{definition}

In particular, the embedding $\catC \hookrightarrow \catC^{\oplus}$ is causal, with the same universal property as before when restricted to causal biproducts, functors and natural transformations.

\paragraph{Splitting idempotents}

We now turn to decoherence, capturing it with the following categorical concept. Recall that, in any category, an \emph{idempotent} on an object $A$ is a morphism $p \colon A \to A$ satisfying $p \circ p =  p$. An idempotent \emph{splits} when it decomposes as $p = m \circ e$ for a pair of morphisms $e \colon A \to B$, $m \colon B \to A$, with $e \circ m = \id{B}$.
We then denote the splitting pair by $(m,e)$. Conversely, for any such pair of morphisms, $m \circ e$ is always an idempotent on $A$.

\begin{definition}
In a category with discarding, an idempotent splits \emph{causally} when it has a splitting $(m,e)$ with $m$ causal.
\end{definition}

\johnignore{Splittings for idempotents may always be added freely, using our second construction of interest.}

\begin{definition}[Karoubi Envelope]
Splittings for idempotents can always be added freely for any category $\catC$. The \emph{Karoubi Envelope} $\Split(\catC)$ of $\catC$ is the following category~(\cite{borceux1986cauchy}, see also~\cite{SelingerIdempotent}).
Objects are pairs $(A,p)$ where $p \colon A \to A$ is an idempotent. Morphisms $f \colon (A,p) \to (B,q)$ are morphisms $f \colon A \to B$ in $\catC$ satisfying
$f = q \circ f \circ p$
In particular, the identity on $(A,p)$ is $p$.
\end{definition}
$\Split(\catC)$ has discarding whenever $\catC$ does, given by $\discard{(A,p)} = \discard{} \circ p \colon (A,p) \to (I,\id{})$.
 In this case\seanignore{
 the causal idempotents are the most physically interesting, and
 } we write $\Splitcausal(\catC)$ for the full subcategory on objects $(A,p)$ with $p$ causal. If $\catC$ is semi-additive then so is $\Splitorcausal(\catC)$, with addition lifted from $\catC$. The key property of the construction is as follows: there is a full (causal) embedding $\catC \hookrightarrow \Splitorcausal(\catC)$ given by $A \mapsto (A,\id{})$, which gives every (causal) idempotent in $\catC$ a (causal) splitting in $\Splitorcausal(\catC)$. As before, $\Splitorcausal(\catC)$ is universal with this property.

\begin{remark}
We may interpret the $\Split(\catC)$ construction as introducing new objects by imposing a fundamental restriction on the allowed morphisms, (i.e. those satisfying $f= q \circ f \circ p$) in Section~\ref{sec:leaks} we discuss how such a restriction can arise due to `leaking' information.
\end{remark}

\paragraph{Comparing the constructions}

\seanignore{
}
It is now natural to ask what the relationship is between the $\catC^{\oplus}$ and $\Splitcausal(\catC)$ constructions, and when they coincide. We now answer this question for both $\Splitcausal(\catC)$ and the more generally definable $\Split(\catC)$ simultaneously. We will require the following weakening of the notion of a biproduct:

\begin{definition} \label{def:disjointembed_finitedecomp}
Let $\catC$ be a category with zero morphisms.
A \emph{disjoint embedding} of a finite collection of objects $\{A_i\}^n_{i=1}$ in $\catC$ is given by an object $A$ and morphisms $\coproj_i \colon A_i \to A$, $\pi_j \colon A \to A_j$ satisfying the first collection of biproduct equations~\eqref{eq:biprod_first}.
When $\catC$ is semi-additive, the morphism $p = \sum^n_{i=1} \coproj_i \circ \pi_i$ is then an idempotent on $A$. When $\catC$ also has discarding, a \emph{causal} disjoint embedding is one for which the morphisms $p$ and $\coproj_i$ are all causal.
\end{definition}

In particular, an empty (causal) disjoint embedding is just an object of the form $(A,0)$ in $\Splitorcausal(\catC)$. Note that a disjoint embedding is not necessarily a biproduct in $\catC$, since it may fail to satisfy~\eqref{eq:biprod_sum}.

\begin{example}
Any (causal) biproduct is in particular a (causal) disjoint embedding. Hence they are present in our examples $\ClassProb$, $\Rel$ and $\MatR$.
\end{example}

\begin{example} \label{example:disjoint_embed_in_quantum}
A motivating example is $\QuantMixed$, which has disjoint embeddings which are not biproducts. Given a collection of (finite-dimensional) Hilbert spaces $\{\hilbH_i\}_i$, we may form their Hilbert space direct sum $\bigoplus^n_{i=1} \hilbH_i$, which is their biproduct in the category $\FHilb$ of (finite-dimensional) Hilbert spaces and continuous linear maps. However this is no longer a biproduct of the $\{\hilbH_i\}_i$ in $\QuantMixed$, where linear maps from $\FHilb$ are identified up to global phase, but only a disjoint embedding. Physically, the distinction is that the addition in $\FHilb$ is given by superposition, while that of $\QuantMixed$ refers to mixing. This example generalises to categories of the form $\CPM(\catC)$, see Proposition~\ref{prop:disj_embed_and_CPM} later.
\end{example}

\begin{remark} \label{rem:dis_embed}
Disjoint embeddings can be understood as a property of our theory allowing for the encoding of classical data. Concretely, they provide the ability to store any collection of systems in a disjoint way in some larger system. In particular, by forming a disjoint embedding $C$ of $n$ copies of $I$ we may store an $n$-level classical system in $C$. However, the choice of $C$ is non-canonical: there can be many which need not be isomorphic.
\end{remark}

Abstractly, the significance of disjoint embeddings is the following.

\begin{theorem} \label{thm:SplitHasBiprod}
Let $\catC$ be semi-additive (with discarding). Then
$\Splitorcausal(\catC)$ has finite (causal) biproducts iff $\catC$ has (causal) disjoint embeddings.
\end{theorem}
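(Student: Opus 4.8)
The plan is to prove both directions by transporting data across the full and faithful embedding $\catC \hookrightarrow \Splitorcausal(\catC)$, $A \mapsto (A,\id{A})$. Since a semi-additive category acquires all finite (causal) biproducts as soon as it has a zero object and binary ones, and since an empty (causal) disjoint embedding is exactly a zero object $(A,0)$ of $\Splitorcausal(\catC)$ (any zero object must have $p = \id{} = 0$), the real content lies in the non-empty finite cases; I will handle a general collection $\{A_i\}_{i=1}^n$ directly rather than reducing to pairs.

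For the ($\Rightarrow$) direction, assuming $\Splitorcausal(\catC)$ has finite (causal) biproducts, I would take objects $\{A_i\}_{i=1}^n$ of $\catC$, form the biproduct $(C,r)$ of the embedded objects $(A_i,\id{A_i})$, and read its structure morphisms $\coproj_i \colon (A_i,\id{A_i}) \to (C,r)$ and $\pi_i \colon (C,r) \to (A_i,\id{A_i})$ as morphisms of $\catC$. Because composition and identities in the Karoubi envelope are computed in $\catC$ (with $\id{(A_i,\id{A_i})} = \id{A_i}$), the first biproduct equations~\eqref{eq:biprod_first} say exactly that $(C,\{\coproj_i\},\{\pi_i\})$ is a disjoint embedding of $\{A_i\}$, while~\eqref{eq:biprod_sum} identifies its idempotent $\sum_i \coproj_i \circ \pi_i$ with $r = \id{(C,r)}$. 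In the causal case, membership $(C,r)\in\Splitcausal(\catC)$ forces $r$ causal, and causality of each $\coproj_i$ together with $r\circ\coproj_i = \coproj_i$ yields $\discard{C}\circ\coproj_i = \discard{A_i}$, so the disjoint embedding is causal.

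For the ($\Leftarrow$) direction, given objects $(A_i,p_i)$ of $\Splitorcausal(\catC)$ I would choose a (causal) disjoint embedding $(C,\{\coproj_i\},\{\pi_i\})$ of the underlying $\{A_i\}$ and set the ``twisted'' idempotent $r = \sum_i \coproj_i \circ p_i \circ \pi_i$. Using $\pi_i\circ\coproj_j = \delta_{ij}\,\id{A_i}$ and $p_i^2 = p_i$ one checks $r\circ r = r$, so $(C,r)$ is an object of $\Split(\catC)$, and the morphisms $\coproj_i\circ p_i \colon (A_i,p_i)\to(C,r)$ and $p_i\circ\pi_i \colon (C,r)\to(A_i,p_i)$ satisfy both~\eqref{eq:biprod_first} and~\eqref{eq:biprod_sum}, the latter being precisely $\sum_i (\coproj_i\circ p_i)\circ(p_i\circ\pi_i) = r = \id{(C,r)}$. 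This exhibits $(C,r)$ as the biproduct.

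The causal refinement of ($\Leftarrow$) is where the bookkeeping concentrates and is the step I expect to be most delicate. Writing $\discard{(A,p)} = \discard{}\circ p$, causality of each $\coproj_i$ and $p_i$ gives $\discard{C}\circ\coproj_i\circ p_i = \discard{A_i}$, and---crucially---causality of the disjoint embedding's own idempotent $\sum_i\coproj_i\circ\pi_i$ unpacks via $\discard{C}\circ\coproj_i = \discard{A_i}$ into $\sum_i \discard{A_i}\circ\pi_i = \discard{C}$, which is exactly what makes $\discard{C}\circ r = \discard{C}$, so that $r$ is causal and $(C,r)\in\Splitcausal(\catC)$; the same computation shows the $\coproj_i\circ p_i$ are causal. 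The genuine insight throughout is the choice of $r = \sum_i \coproj_i\circ p_i\circ\pi_i$: it is what lets the biproduct sum equation~\eqref{eq:biprod_sum} hold in $\Split(\catC)$ even though the underlying disjoint embedding need not satisfy~\eqref{eq:biprod_sum} in $\catC$, which is precisely the gap between the two notions noted after Definition~\ref{def:disjointembed_finitedecomp}.
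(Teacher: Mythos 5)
Your proof is correct and takes essentially the same route as the paper's: the forward direction identifies a (causal) disjoint embedding with a (causal) biproduct of the form $\bigoplus_i (A_i,\id{})$ in $\Splitorcausal(\catC)$, and the converse uses exactly the paper's twisted idempotent $r=\sum_i \coproj_i\circ p_i\circ\pi_i$ with structure maps $\coproj_i\circ p_i$ and $p_i\circ\pi_i$. The only cosmetic differences are that you handle general finite families directly where the paper reduces to the empty and binary cases, and that you spell out the causal bookkeeping (which the paper leaves as ``easy to check'') in full.
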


\begin{proof}
We have already seen that $\Splitorcausal(\catC)$ is semi-additive, so the statement makes sense. Expanding the definitions shows that a (causal) disjoint embedding is precisely a (causal) biproduct of the form $(A,p) = \bigoplus^n_{i=1} (A_i, \id{})$ in $\Splitorcausal(\catC)$. Hence the conditions are clearly necessary. It's easy to see that an empty (causal) disjoint embedding $(A,0)$ is a zero object in $\Splitorcausal(\catC)$, so it suffices to check this category has binary (causal) biproducts.

Now for any pair of objects $(A_1, p_1)$, $(A_2, p_2)$ in $\Splitorcausal(\catC)$, by assumption the objects $A_1$ $A_2$ have a (causal) disjoint embedding $(A,p)$, with morphisms $\coproj_i \colon A_i \to A$ and $\pi_j \colon A \to A_j$.
Then $q = \sum^2_{i=1} \coproj_i \circ p_i \circ \pi_i$ is a (causal) idempotent on $A$.
Further, it is easy to check that $\coproj_i \circ p_i \colon (A_i, p_i) \to (A,q)$ and $p_i \circ \pi_i \colon (A, q) \to (A_i,p_i)$ are well-defined morphisms in $\Splitorcausal(\catC)$ making $(A, q)$ a (causal) biproduct $(A_1,p_1) \oplus (A_2, p_2)$.
\end{proof}

Observe that the idempotents arising from disjoint embeddings are those with the following property.

\begin{definition} \label{def:finite_decomp}
An idempotent $p \colon A \to A$ has a \emph{finite decomposition} when it may be written as a finite sum $p = \sum^n_{i=1} m_i \circ e_i$
\seanignore{
\[
p = \sum^n_{i=1} m_i \circ e_i
\]
}
of split idempotents $p_i = m_i \circ e_i$ for which $p_i \circ p_j = 0$ for $i \neq j$. Such a decomposition is \emph{causal} when all of the morphisms $m_i$ are causal.
\end{definition}

We can now determine when our two constructions coincide (cf.~\cite[Corollary 4.7]{SelingerIdempotent}).

%
\begin{corollary} \label{cor:IdemBiprodEquiv}
When $\catC$ has (causal) disjoint embeddings, there is a canonical full, semi-additive (causal) embedding $F \colon \catC^{\oplus} \to \Splitorcausal(\catC)$. Further, the following are equivalent:
\begin{enumerate}[itemsep=-1mm]
\item
\label{cond:Feq}
$F$ is a (causal) equivalence of categories;
\item \label{cond:caussplpit}

every (causal) idempotent splits (causally) in $\catC^{\oplus}$;
\item \label{cond:fin_decomp}
every (causal) idempotent in $\catC$ has a (causal) finite decomposition.
\end{enumerate}
When the causal form of these hold we say $\catC$ has the \emph{finite decomposition property}.
\end{corollary}

\begin{proof}
To see that $F$ exists, apply the universal property of $\catC^{\oplus}$ to the (causal) embedding $\catC \to \Splitorcausal(\catC)$, using Theorem~\ref{thm:SplitHasBiprod}. Concretely, $F$ acts on objects by $(A_1, \cdots, A_n) \mapsto \bigoplus^n_{i=1} (A_i, \id{})$.

\eqref{cond:Feq}$\Rightarrow$ \eqref{cond:caussplpit}: Since $F$ is a (causal) equivalence and (causal) idempotents split (causally) in $\Splitorcausal(\catC)$, they do in $\catC^{\oplus}$.
\eqref{cond:caussplpit} $\Rightarrow$ \eqref{cond:fin_decomp}:
It's easy to see that a (causal) idempotent $p \colon A \to A$ splits (causally) over $(A_i)^n_{i=1}$ in $\catC^{\oplus}$ iff it has a (causal) finite decomposition $p = \sum^n_{i=1} p_i$ with each $p_i$ splitting over $A_i$.

\eqref{cond:fin_decomp} $\iff$ \eqref{cond:Feq}: By construction, $F$ is (causally) essentially surjective on objects iff every object $(A,p)$ forms a (causal) biproduct $\bigoplus^n_{i=1} (A_i, \id{A_i})$ in $\Splitorcausal(\catC)$. By definition, this holds iff every $p$ has a (causal) finite decomposition.
\end{proof}

\begin{example}
When $\catC$ already has biproducts the embedding $\catC \hookrightarrow \catC^{\oplus}$ is an equivalence, and Corollary~\ref{cor:IdemBiprodEquiv} amounts to the fact that any finitely decomposable idempotent $p = p_1 + \dots + p_n$ in $\catC$ already splits over $\bigoplus^n_{i=1} A_i$, where $p_i$ splits over $A_i$, for each $i$.
\end{example}


\begin{remark}[Monoidal Structure] \label{rem:MonoidalStructure}
These results are compatible with monoidal structure whenever it is present, in a straightforward way:
\begin{itemize}[leftmargin=*,itemsep=-1mm]
\item\sloppypar
a \emph{(symmetric) monoidal category with discarding} $(\catC, \otimes, \discard{})$ is a (symmetric) monoidal category $(\catC, \otimes ,I)$ which is also a category with discarding $(\catC,\discard{}, I)$ with $I$ being the monoidal unit, for which all coherence isomorphisms are causal and
$\discard{A \otimes B} = \lambda_I \circ (\discard{A} \otimes \discard{B})$
for all objects $A, B$,
where $\lambda_I \colon I \otimes I \to I$ is the coherence isomorphism.
\item
a \emph{causal} (symmetric) monoidal functor $F \colon (\catC, \otimes, \discard{}) \to (\catD, \otimes, \discard{})$ is a causal functor which is strong (symmetric) monoidal, with causal structure isomorphisms $I \to F(I)$ and $F(A) \otimes F(B) \to F(A \otimes B)$.
\item
A \emph{semi-additive monoidal category} is a monoidal category which is monoidally enriched in $\CMon$. Explicitly, it is semi-additive with $f \otimes (g + h) = f \otimes g + f \otimes h$, $(f + g) \otimes h = f \otimes h + g \otimes h$ and $f \otimes 0 = 0 = 0 \otimes g$ for all morphisms $f, g, h$.
\seanignore{
\begin{align*}
f \otimes (g + h) &= f \otimes g + f \otimes h \\
(f + g) \otimes h &= f \otimes h + g \otimes h \\
f \otimes 0 &= 0 = 0 \otimes g
\end{align*}
}
for all morphisms $f, g, h$.
\end{itemize}

When $\catC$ is a semi-additive (symmetric) monoidal category (with discarding) so are each of $\catC^{\oplus}$ and $\Splitorcausal(\catC)$, and they satisfy the same universal properties with respect to such categories and (causal, symmetric) monoidal semi-additive functors and (causal) monoidal natural transformations between them. In particular, the functors and equivalences of Corollary~\ref{cor:IdemBiprodEquiv} are now monoidal ones.
\end{remark}

\paragraph{Results on Idempotent Splittings}

Before turning to quantum theory, we briefly consider some abstract results of use later. 
\seanignore{
} Firstly, recall that a category with a distinguished object $I$ is \emph{well-pointed} when for all $f, g \colon A \to B$ with $f \circ a = g \circ a$ for all $a \colon I \to A$, we have $f = g$. For any morphism $f \colon A \to B$ we set
$\text{Im}(f) := \{f \circ a \mid a \colon I \to A \}$.

\begin{example}
All of the categories with discarding $\QuantMixed$, $\ClassProb$, $\Rel$, $\MatR$ are well-pointed over their usual object $I$. \emph{Real quantum theory} provides a physically interesting non-well-pointed category~\cite{hardy2012limited}.
\end{example}

\begin{lemma} \label{lem:split_lem}
 Let $p, q \colon A \to A$ be (causal) idempotents in a well-pointed category $(\catC,I)$, with $\text{Im}(p) = \text{Im}(q)$.
 Then $p$ splits (causally) iff $q$ does, and $p$ has a (causal) finite decomposition iff $q$ does.{}
\end{lemma}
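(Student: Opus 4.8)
The plan is to first extract, from the hypothesis $\text{Im}(p) = \text{Im}(q)$ together with well-pointedness, the two key algebraic identities $q \circ p = p$ and $p \circ q = q$, and then use these to transport splittings back and forth. For the first identity, note that for any state $a \colon I \to A$ we have $p \circ a \in \text{Im}(p) = \text{Im}(q)$, so $p \circ a = q \circ b$ for some $b$; applying $q$ and using $q \circ q = q$ gives $q \circ (p \circ a) = q \circ b = p \circ a$. As this holds for all $a$, well-pointedness yields $q \circ p = p$, and the symmetric argument gives $p \circ q = q$.

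Next I would settle the splitting claim directly in $\catC$. Suppose $p = m \circ e$ with $e \circ m = \id{B}$. From $q \circ p = p$ we get $q \circ m \circ e = m \circ e$, and post-composing with $m$ (using $e \circ m = \id{B}$) gives $q \circ m = m$. Then $(m, e \circ q)$ is a splitting of $q$: indeed $m \circ (e \circ q) = p \circ q = q$ by the second identity, while $(e \circ q) \circ m = e \circ (q \circ m) = e \circ m = \id{B}$. Crucially the map $m$ is reused unchanged, so if the given splitting of $p$ is causal (i.e.\ $m$ is causal) then so is this splitting of $q$. Swapping the roles of $p$ and $q$ gives the converse, establishing that $p$ splits (causally) iff $q$ does.

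For the finite-decomposition claim I would reduce to the splitting claim just proved, but applied inside the biproduct completion $\catC^{\oplus}$ (which requires $\catC$ semi-additive, as the notion of finite decomposition already presupposes). The point is that a (causal) finite decomposition of an idempotent in $\catC$ is, by the argument in the proof of Corollary~\ref{cor:IdemBiprodEquiv}, precisely a (causal) splitting of that idempotent in $\catC^{\oplus}$. So it suffices to check that the first part of the lemma applies to $p$ and $q$ regarded as idempotents on $\langle A \rangle$ in $\catC^{\oplus}$. First, $\catC^{\oplus}$ is well-pointed over $\langle I \rangle$: two parallel matrices agree as soon as they agree on all states supported on a single component, and well-pointedness of $\catC$ then forces each pair of entries to coincide. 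Second, states $\langle I \rangle \to \langle A \rangle$ in $\catC^{\oplus}$ are exactly states $I \to A$ in $\catC$, and the action of $p$ (as a $1 \times 1$ matrix) on them agrees with its action in $\catC$; hence $\text{Im}(p) = \text{Im}(q)$ continues to hold in $\catC^{\oplus}$. Applying the splitting claim in $\catC^{\oplus}$ then gives that $p$ splits (causally) there iff $q$ does, i.e.\ that $p$ has a (causal) finite decomposition iff $q$ does.

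I expect the main obstacle to be the bookkeeping in this last reduction rather than any conceptual difficulty: one must confirm that causality of a splitting map in $\catC^{\oplus}$ corresponds exactly to causality of each component map $m_i$ in the finite decomposition (which follows from $\discard{\langle A \rangle} \circ m = \discard{\langle A_i \rangle}$ unpacking to $\discard{} \circ m_i = \discard{A_i}$ for each $i$), and that the discarding structure on $\catC^{\oplus}$ is the one making the canonical embedding causal. The algebraic core---the identities $q \circ p = p$, $p \circ q = q$ and the reuse of $m$---is short; the care lies in transferring everything cleanly to $\catC^{\oplus}$.
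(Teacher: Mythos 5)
Your proposal is correct, and its essential content---deriving $q \circ p = p$ and $p \circ q = q$ from well-pointedness and $\mathrm{Im}(p) = \mathrm{Im}(q)$---is exactly the paper's opening move. Where you diverge is in how the two transfer claims are then organized. The paper observes that these two identities say precisely that $(A,p)$ and $(A,q)$ are (causally) isomorphic objects of $\Splitorcausal(\catC)$, and then transfers both properties at once: an object of the Karoubi envelope forms a (causal) biproduct $\bigoplus_{i=1}^n (A_i,\id{})$ iff any isomorphic object does, which by the correspondence already set up in Corollary~\ref{cor:IdemBiprodEquiv} is exactly the finite decomposition property, with splitting recovered as the case $n=1$. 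You instead prove the splitting transfer by a direct computation in $\catC$ (showing $q\circ m = m$ and that $(m, e\circ q)$ splits $q$, with $m$ reused so causality is preserved---this is correct), and then handle finite decompositions by passing to $\catC^{\oplus}$, verifying it is well-pointed over $\langle I\rangle$ and that images are unchanged, and rerunning the splitting argument there. Both routes work; the paper's is shorter because the isomorphism in $\Splitorcausal(\catC)$ does all the bookkeeping for free, while yours is more elementary in the splitting half but pays for it with the extra verifications about $\catC^{\oplus}$ (well-pointedness, the matching of causal splittings with causal finite decompositions) that the paper's phrasing avoids. One small point worth making explicit either way: the finite-decomposition clause only makes sense when $\catC$ is semi-additive, which you correctly flag and the paper leaves implicit.
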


\begin{proof}
For all $a \colon I \to A$ we have $q \circ a = p \circ b$ for some $b$ and so $p \circ q \circ a = p \circ p \circ b = p \circ b = q \circ a$. Hence by well-pointedness $p \circ q = q$, and dually $q \circ p = p$ also. This states precisely that $(A,p)$ and $(A,q)$ are (causally) isomorphic in $\Splitorcausal(\catC)$. Then $p$ has a (causal) finite decomposition iff it forms a (causal) biproduct $\bigoplus^n_{i=1} (A_i, \id{})$, iff $q$ does. Taking $i=1$ shows that $p$ splits (causally) precisely when $q$ does.
\end{proof}

Next we observe that for many theories, including quantum theory and our other probabilistic examples, splittings for causal idempotents in fact suffice to provide splittings of a broader class. \seanignore{ (c.f. Remark~\ref{ex:non-causal}).}

\begin{definition} \label{def:subcausal} In a semi-additive category with discarding, a morphism $f \colon A \to B$ is \emph{sub-causal} when there is some $x \colon A \to I$ with
$\left(\,\discard{} \circ f\,\right) + x = \discard{}$.
\end{definition}

\begin{examples}
Any causal process is in particular sub-causal. In $\QuantMixed$, a process is sub-causal when it is trace non-increasing. A sub-causal process in $\ClassProb$ is a sub-stochastic matrix, i.e.~one whose columns have sum bounded by $1$. In $\Rel$ and $\Modalp$ every process is sub-causal.
\end{examples}

\begin{proposition} \label{prop:subcausal_idem}
Let $\catC$ be a semi-additive category with discarding satisfying: 
\begin{itemize}[itemsep=-1mm]
\item
Cancellativity: $f + g = f + h \implies g = h $;
\item
For every non-zero $f \colon A \to B$ there exists $a \colon I \to A$ with $\discard{} \circ f \circ a = \id{I}$.
\end{itemize}
If causal idempotents causally split in $\catC$, so do (non-zero) sub-causal idempotents.
\end{proposition}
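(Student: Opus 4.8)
The plan is to complete the non-zero sub-causal idempotent $p \colon A \to A$ to a genuinely causal idempotent $q$ on the \emph{same} object, split $q$ causally by hypothesis, and then transfer this splitting back to $p$ while retaining causality of the mono part. First I would unpack sub-causality: writing the defect as $x \colon A \to I$, so that $\discard{} \circ p + x = \discard{}$, and precomposing with $p$ using $p \circ p = p$, I obtain $\discard{} \circ p + x \circ p = \discard{} \circ p$, whence cancellativity forces $x \circ p = 0$. If $x = 0$ then $p$ is already causal and there is nothing to prove, so I may assume $x \neq 0$, and in particular $p \neq 0$.

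Next I would construct the causal completion. Applying the second hypothesis to the non-zero morphism $p$ produces a state $a \colon I \to A$ with $\discard{} \circ p \circ a = \id{I}$; I set $b := p \circ a$. This single choice does all the work: $b$ is causal ($\discard{} \circ b = \id{I}$), it is fixed by $p$ ($p \circ b = p \circ p \circ a = b$), and it is killed by the defect ($x \circ b = (x \circ p) \circ a = 0$). I then define
\[
q := p + b \circ x .
\]
A direct expansion, using $p \circ b = b$, $x \circ p = 0$ and $x \circ b = 0$, collapses $q \circ q$ to $p + b \circ x = q$, so $q$ is idempotent; and $\discard{} \circ q = \discard{} \circ p + (\discard{} \circ b) \circ x = \discard{} \circ p + x = \discard{}$, so $q$ is causal.

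Finally I would transfer the splitting. By hypothesis $q$ splits causally, say $q = m \circ e$ with $e \circ m = \id{}$ and $m$ causal. The completion obeys the two asymmetric relations $q \circ p = p$ and $p \circ q = q$, both immediate from $x \circ p = 0$ and $p \circ b = b$. From $p \circ q = q$, i.e.\ $p \circ m \circ e = m \circ e$, postcomposing with $m$ and cancelling $e \circ m = \id{}$ yields $p \circ m = m$. I then claim that $(m,\, e \circ p)$ is a causal splitting of $p$: indeed $m \circ (e \circ p) = q \circ p = p$ and $(e \circ p) \circ m = e \circ (p \circ m) = e \circ m = \id{}$, while the mono part $m$ is the same causal map supplied by the splitting of $q$.

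The hard part is this last step. Although the relations $q \circ p = p$ and $p \circ q = q$ show that $(A,p)$ and $(A,q)$ are isomorphic objects in $\Split(\catC)$, the isomorphism is witnessed by $p$ and $q$ themselves and is \emph{not} causal, so one cannot simply combine ``causal idempotents split causally'' with invariance of splitting under isomorphism. Reusing the very map $m$ from $q$'s splitting, and only adjusting the epi part to $e \circ p$, is precisely what carries the causality of $m$ through the transfer; and identifying the completing state $b = p \circ a$ so that $q$ is simultaneously causal and idempotent is the other place where both hypotheses are genuinely used.
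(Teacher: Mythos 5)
Your proof is correct and follows essentially the same route as the paper's: you derive $x \circ p = 0$ by cancellativity, form the same causal idempotent $q = p + (p \circ a) \circ x$, establish $p \circ q = q$ and $q \circ p = p$, and transfer the causal splitting $(m,e)$ of $q$ to the splitting $(m, e \circ p)$ of $p$. You have merely filled in the verifications the paper leaves to the reader.
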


\begin{proof}
Let $p \colon A \to A$ be a non-zero sub-causal idempotent, with $(\discard{} \circ p) + x = \discard{}$. Note that we have
$\discard{} \circ p = (\discard{} \circ p + x) \circ p = (\discard{} \circ p) + (x \circ p)$
and so by cancellativity $x \circ p = 0$.
By assumption there exists $a \colon I \to A$ with $\discard{} \circ p \circ a = 1$. One may then check that $q = p + (p \circ a \circ x)$ is a causal idempotent satisfying $p \circ q = q$ and $q \circ p = p$. These ensure that if $q$ has splitting $(m, e)$ then $p$ has splitting $(m, e \circ p)$.
\end{proof}

\section{Quantum theory} \label{sec:quantum}

We now turn to the main result of this paper, that for the example of quantum theory our two constructions coincide, both leading to the (symmetric monoidal) category of finite dimensional C*-algebras.

\johnignore{
Recall that a \emph{C*-algebra} is a Banach algebra $(A,\|\_\|)$ over $\mathbb{C}$ with an involution $(-)^*$ compatible with complex conjugation, and which satisfies $\|A^*A\|=\|A^*\|\|A\|$ for all elements $A$. Specifically, we are concerned here with finite-dimensional C*-algebras, which are of most interest for quantum information theory. These form a symmetric monoidal category as we now describe.}

\begin{example}[C*-algebras]
In the category $\CStar$ objects are finite dimensional C*-algebras and morphisms completely positive linear maps (in the same sense as defined for $\QuantMixed$). The monoidal product is the standard tensor product of finite-dimensional C*-algebras and the monoidal unit is $\mathcal{B}(\mathbb{C})$. Semi-additive structure is provided by the standard sum of linear maps. Discarding effects are provided by the trace, and biproducts by the direct sum of C*-algebras.

Note that $\QuantMixed$ and $\ClassProb$ are each equivalent to full subcategories of $\CStar$, corresponding to C*-algebras of the form $\mathcal{B}(\hilbH)$, and the commutative C*-algebras, respectively. These subcategories can also be characterised in terms of leaks, see Section~\ref{sec:leaks}.
\end{example}

There is a well known classification result~\cite{bratteli1972inductive} stating that any finite dimensional C*-algebra is isomorphic to a direct sum of complex matrix algebras. It is therefore unsurprising that our first road, the biproduct completion, leads to this category: see~\cite[Example 3.4.]{EPTCS171.7} for the details.

\begin{example}\label{ex:CStarFromBiproducts}
There is a monoidal, causal equivalence of categories $\CStar \simeq \QuantMixed^{\oplus}$.
\seanignore{
$\CStar$ is the biproduct completion of $\QuantMixed$.
}
\end{example}

The second road, however, requires some more work.

\begin{proposition} \label{prop:CPTPIdemSplitFCStar}
$\QuantMixed$ has the finite decomposition property.
\end{proposition}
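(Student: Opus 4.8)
The plan is to invoke Corollary~\ref{cor:IdemBiprodEquiv}. Since $\QuantMixed$ has disjoint embeddings (Example~\ref{example:disjoint_embed_in_quantum}) and $\CStar = \QuantMixed^{\oplus}$ (Example~\ref{ex:CStarFromBiproducts}), the finite decomposition property is equivalent to the condition that every idempotent splits in $\CStar$. So I would reduce the problem to showing that an arbitrary idempotent completely positive map $p \colon \mathcal{A} \to \mathcal{A}$ on a finite-dimensional C*-algebra $\mathcal{A}$ factors as $m \circ e$ with $e \colon \mathcal{A} \to \mathcal{B}$ and $m \colon \mathcal{B} \to \mathcal{A}$ completely positive, $e \circ m = \id{\mathcal{B}}$, and $\mathcal{B}$ again a finite-dimensional C*-algebra.

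The engine for this is the structure theory of completely positive idempotents, in the spirit of Choi--Effros and of Blume-Kohout et al.~\cite{blume2010information}: the image $p(\mathcal{A})$, equipped with the twisted product $a \ast b := p(ab)$, the inherited involution, and norm, is itself a finite-dimensional C*-algebra $\mathcal{B}$, and $p$ corestricts to a completely positive surjection $e \colon \mathcal{A} \to \mathcal{B}$ whose composite with the inclusion $m \colon \mathcal{B} \to \mathcal{A}$ recovers $p$ while $e \circ m = \id{\mathcal{B}}$. This already yields the splitting. To exhibit the finite decomposition of Definition~\ref{def:finite_decomp} concretely, I would then apply the classification $\mathcal{B} \cong \bigoplus_k \mathcal{B}(\hilbH_k)$, each summand an object of $\QuantMixed$, and set $e_k = \pi_k \circ e$ and $m_k = m \circ \coproj_k$ using the biproduct structure maps of $\CStar$; the orthogonality $\pi_j \circ \coproj_i = 0$ for $i \neq j$ then gives $(m_i \circ e_i) \circ (m_j \circ e_j) = 0$ and $\sum_k m_k \circ e_k = p$, matching the definition exactly.

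The main obstacle is generality. The excerpt stresses that we drop the self-adjointness assumption of \cite{EPTCS171.7}, so $p$ is an arbitrary completely positive idempotent---neither unital, trace-preserving, nor contractive in general---whereas the cleanest structure theorems are stated for self-adjoint, conditional-expectation-like idempotents. Two routes suggest themselves for bridging this gap, and I expect verifying one of them to be the hard part. The first is to check directly that the Choi--Effros construction survives for non-self-adjoint completely positive idempotents in finite dimensions, the delicate point being that the positive cone of the twisted C*-structure on $p(\mathcal{A})$ must coincide with the operator-system cone inherited from $\mathcal{A}$, so that both $m$ and $e$ are genuinely completely positive. The second is to reduce to a tractable representative: since $\QuantMixed$ is well-pointed, Lemma~\ref{lem:split_lem} lets me replace $p$ by any idempotent with the same image without affecting splitting or finite decomposability, and Proposition~\ref{prop:subcausal_idem}---after confirming that $\QuantMixed$ is cancellative and meets its normalisation hypothesis---reduces the sub-causal case to the causal one, where the idempotent-channel structure theorem applies. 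Marrying these reductions so that they cover every idempotent, not merely the sub-causal ones, is where the real care is needed.
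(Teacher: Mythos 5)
The statement you are asked to prove is the \emph{causal} finite decomposition property: every causal (i.e.\ trace-preserving) idempotent in $\QuantMixed$ admits a causal finite decomposition. Your proposal instead targets the splitting of \emph{arbitrary} completely positive idempotents, which is strictly stronger and is in fact posed as an open question in the paper's conclusion (``Do all idempotents in $\CStar$ split?''). This mis-calibration matters: your route~1, pushing a Choi--Effros-type construction through for non-self-adjoint, non-unital, non-contractive CP idempotents, is precisely the unresolved difficulty, and nothing in your sketch bridges it. Moreover Proposition~\ref{prop:subcausal_idem} only upgrades splittings from causal to \emph{sub-causal} idempotents, so your hope of ``marrying these reductions so that they cover every idempotent'' cannot be realised with the tools available in the paper. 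As written, the proposal therefore does not close.

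Restricted to the causal case --- which is all the proposition requires --- your route~2 is essentially the paper's proof. The paper takes a causal idempotent $p$ on $\mathcal{B}(\hilbH)$, invokes the structure theorem of Blume-Kohout et al.\ (\cite[Theorem 5]{blume2010information}) to write its image as $\left\{ \sum_k M_{A_k} \otimes \tau_k \right\}$ for a decomposition $\hilbH \simeq \bigoplus_k A_k \otimes B_k$, builds from this an explicit causally finitely decomposable idempotent $q = \sum_k m_k \circ e_k$ with the same image (with $m_k \colon M \mapsto (1/\mathrm{Tr}(\tau_k))\, M \otimes \tau_k$ and $e_k$ the corresponding partial-trace restriction), and then transfers the decomposition from $q$ to $p$ via Lemma~\ref{lem:split_lem} and well-pointedness. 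Note that the paper never passes through an abstract operator-system or twisted-product argument; it works directly with the concrete image characterisation for idempotent channels, sidestepping the cone-matching issue you correctly flag as delicate. If you commit to this route and drop the ambition of handling non-(sub)causal idempotents, your argument becomes correct and coincides with the paper's.
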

\begin{proof}
We saw in Example~\ref{example:disjoint_embed_in_quantum} that $\QuantMixed$ has causal disjoint embeddings.
Now let $p \in \QuantMixed(\hilbH,\hilbH)$ be a causal idempotent.
Then $p$ is an idempotent, trace-preserving, completely positive linear map on $\mathcal{B}(\hilbH)$.
 By \cite[Theorem 5]{blume2010information}, there is a decomposition $\hilbH \simeq \bigoplus_k A_k \otimes B_k$ and set of positive semidefinite matrices $\tau_K$ on $B_k$ with:
\[
p(\mathcal{B}(\hilbH)) = \{ \textstyle{\sum}_k M_{A_k} \otimes \tau_k \colon M_{A_k} \in \mathcal{B}(A_k) \}
\]
Without loss of generality, assume $\tau_k \neq 0$ and so $\text{Tr}(\tau_k) \neq 0$, for all $k$.
Then $p(\mathcal{B}(\hilbH)) = q(\mathcal{B}(\hilbH))$ for the causally finitely decomposable idempotent $q = \sum_{k \in K} m_k \circ e_k$, where $m_k \colon \mathcal{B}(A_k) \to \mathcal{B}(\hilbH)$ is given by $M \mapsto ( 1 / \text{Tr}(\tau_k)) M \otimes \tau_k $, and $e_k \colon \mathcal{B}(\hilbH) \to \mathcal{B}(A_k)$ is given by $M \mapsto \text{Tr}_{B_k} ( \text{Tr}_{A_j \otimes B_j, j \neq k}( M))$.

Since states $\rho \in \QuantMixed(\mathbb{C},\hilbH)$
may be viewed as elements of $\mathcal{B}(\hilbH)$, on which $p$ is an idempotent map, this gives that $\text{Im}(p) = \text{Im}(q)$ in the well-pointed category $(\QuantMixed, \mathbb{C})$. Hence by Lemma~\ref{lem:split_lem}, since $q$ has a causal finite decomposition, so does $p$.
\end{proof}

By combining Example~\ref{ex:CStarFromBiproducts}, Proposition~\ref{prop:CPTPIdemSplitFCStar} and Corollary~\ref{cor:IdemBiprodEquiv}, we reach our main result:
\begin{corollary} \label{cor:quantum_main_result}
There is a monoidal, causal equivalence:
\[
\Splitcausal(\QuantMixed) \simeq \CStar \simeq \QuantMixed^\oplus
\]
\end{corollary}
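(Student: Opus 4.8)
The plan is to assemble the corollary from ingredients already in hand, since the substantive work lives in Proposition~\ref{prop:CPTPIdemSplitFCStar} and Example~\ref{ex:CStarFromBiproducts}. First I would dispatch the right-hand equivalence $\QuantMixed^\oplus \simeq \CStar$ by invoking Example~\ref{ex:CStarFromBiproducts}, which identifies $\CStar$ as the biproduct completion of $\QuantMixed$ via the classification~\cite{bratteli1972inductive} of finite-dimensional C*-algebras as direct sums of matrix algebras. Concretely the comparison sends $\langle \hilbH_1, \dots, \hilbH_n \rangle$ to $\bigoplus_i \mathcal{B}(\hilbH_i)$, and I would check that it carries the lifted biproduct, the trace-discarding, and the monoidal product on $\QuantMixed^\oplus$ to the direct sum, the C*-trace, and the C*-algebraic tensor product respectively, so that this equivalence is both causal and monoidal.

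Second, for the left-hand equivalence I would apply Corollary~\ref{cor:IdemBiprodEquiv}. Example~\ref{example:disjoint_embed_in_quantum} already supplies the causal disjoint embeddings needed as a hypothesis, while Proposition~\ref{prop:CPTPIdemSplitFCStar} establishes that $\QuantMixed$ has the finite decomposition property, i.e.\ condition~\ref{cond:fin_decomp} of that corollary. By the equivalence of conditions there, the canonical functor $F \colon \QuantMixed^\oplus \to \Splitcausal(\QuantMixed)$ is then a full, semi-additive, causal equivalence of categories.

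Third, I would upgrade both equivalences to monoidal ones using Section~\ref{Sec:MonoidalStructure}. Since $\QuantMixed$ is a semi-additive symmetric monoidal category with discarding (under the C*-tensor product and the trace), that section guarantees that $\QuantMixed^\oplus$ and $\Splitcausal(\QuantMixed)$ inherit the same structure and that the equivalence $F$ is automatically a monoidal, causal functor. Composing the two monoidal causal equivalences then yields the stated chain.

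The genuine mathematical weight of the result has been front-loaded into the two inputs: the structure theorem of Blume-Kohout et al.~\cite{blume2010information} underlying the finite decomposition property, and the classification used in Example~\ref{ex:CStarFromBiproducts}. Given these, the corollary is essentially bookkeeping, and I expect the only real care required to be ensuring that all three equivalences are simultaneously causal \emph{and} monoidal. In particular, the one point where a careless argument could slip is verifying that the equivalence $\QuantMixed^\oplus \simeq \CStar$ respects discarding---matching the trace lifted to the biproduct completion with the C*-trace---and respects the tensor product---matching the product inherited by $\QuantMixed^\oplus$ with the C*-algebraic tensor product.
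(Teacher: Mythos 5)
Your proposal is correct and matches the paper's proof, which is exactly the one-line combination of Example~\ref{ex:CStarFromBiproducts}, Proposition~\ref{prop:CPTPIdemSplitFCStar} and Corollary~\ref{cor:IdemBiprodEquiv}; your additional care about the causal and monoidal compatibility is precisely what Section~\ref{Sec:MonoidalStructure} supplies.
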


\johnignore{\begin{proof}
Combine Example~\ref{ex:CStarFromBiproducts}, Proposition~\ref{prop:CPTPIdemSplitFCStar} and Corollary~\ref{cor:IdemBiprodEquiv}.
\end{proof}}

Hence all causal (trace-preserving) idempotents causally split in $\CStar$. By Proposition~\ref{prop:subcausal_idem}, the same in fact holds for all idempotents which are sub-causal, \textit{i.e.} trace-non increasing. However, it remains an open question whether all idempotents split.
\seanignore{
}

\section{Further Examples}

\paragraph{Classical probability theory}
\johnignore {Another example in which both constructions coincide is provided by classical probability theory, since in fact}
There are (monoidal, causal) equivalences:
\[
\ClassProb^\oplus \simeq \ClassProb \simeq \Split(\ClassProb) \simeq \Splitcausal(\ClassProb)
\]
The left hand equivalence holds since
\johnignore{, as we have seen,}
 $\ClassProb$ already has biproducts. Conversely, it follow from a Theorem of Flor (\cite[Theorem 2]{flor1969groups}, see also~\cite[Theorem 4]{demarr1974nonnegative}) that any idempotent $p$ in $\ClassProb$ has a finite decomposition $p = z_1 + \dots + z_k$ where the $z_i$ satisfy $z_i \circ z_j = \delta_{i,j} z_i$ and each are of rank one, hence splitting over $I$. In particular, \emph{every} idempotent in $\ClassProb$ has a finite decomposition and so splits, yielding the other equivalences. Hence, as in the quantum case, these constructions coincide.

\paragraph{Possibilistic theories}
In contrast, the constructions will generally fail to coincide in theories of a possibilistic nature, such as $\Modalp$ or $\Rel$.
\seanignore{
To make this precise, say that a theory is \emph{positive} when $\discard{} \circ (f + g) = 0 \implies f = g = 0$, for all morphisms $f, g$.
} By a \emph{possibilistic} theory we mean one in which the addition $+$ is idempotent, the scalars $\sclr \colon I \to I$ under $(\circ, +)$ are the Booleans $\{0,1\}$, and we have  $\discard{} \circ f = 0 \implies f=0$ for all morphisms $f$.
We will consider possibilistic theories with a particular physically motivated property.
Call a pair of states $\pds_0, \pds_1 \colon I \to A$ on $A$ \emph{perfectly distinguishable} when there exists a pair of effects $\pdeff{\pds_0}, \pdeff{\pds_1}$ on $A$ with $\pdeff{\pds_0} + \pdeff{\pds_1} = \discard{}$ and $\pdeff{\pds_i} \circ \pds_j = \delta_{i,j}$. We say that a theory satisfies \emph{perfect distinguishability} when every system not isomorphic to $I$ or a zero object $0$ has a pair of perfectly distinguishable states, and there exists at least one such system. Then we have the following (see App.~\ref{proof:pos_theory_counter} for proof):

\begin{proposition} \label{lem:pos_theory_counter}
Any possibilistic theory $(\catC, \discard{})$ with perfect distinguishability lacks the finite decomposition property.
\end{proposition}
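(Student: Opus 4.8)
The plan is to exhibit a single causal idempotent that admits no (causal) finite decomposition, so that condition~\ref{cond:fin_decomp} of Corollary~\ref{cor:IdemBiprodEquiv} fails. By perfect distinguishability, fix a system $A \not\cong I, 0$ together with perfectly distinguishable states $a_0, a_1 \colon I \to A$ and effects $\bar a_0, \bar a_1$, so that $\bar a_i \circ a_j = \delta_{i,j}$ and $\bar a_0 + \bar a_1 = \discard{}$ (note that $1 \neq 0$, since the states are genuinely distinguished). The witnessing idempotent will be the ``triangular'' map
\[
p := a_0 \circ \bar a_0 + a_1 \circ \discard{}.
\]
A direct calculation using $\bar a_i \circ a_j = \delta_{i,j}$, $\discard{} \circ a_j = 1$, and --- crucially --- idempotency of $+$ (in the step $\bar a_0 + \discard{} = \discard{}$) shows that $p \circ p = p$ and $\discard{} \circ p = \discard{}$, so $p$ is a causal idempotent. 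The strategy is to use the number of perfectly distinguishable states as an invariant separating the object $(A,p)$ of $\Splitcausal(\catC)$ from every biproduct of objects of $\catC$.

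First I would determine all states and effects of $(A,p)$, that is, the $\sigma \colon I \to A$ and $f \colon A \to I$ with $p \circ \sigma = \sigma$ and $f \circ p = f$. Writing $r := a_0 \circ \bar a_0 + a_1 \circ \bar a_1$, one checks that $r$ is idempotent, splits over a copy of $I \oplus I$ via $a_0, a_1$ and $\bar a_0, \bar a_1$, and satisfies $r \circ p = p \circ r = p$. Hence every $p$-invariant state is $r$-invariant, and $r$-invariance forces $\sigma = (\bar a_0 \circ \sigma)\, a_0 + (\bar a_1 \circ \sigma)\, a_1$ with Boolean coefficients; imposing $p \circ \sigma = \sigma$ then leaves exactly $\{0,\, a_1,\, a_0 + a_1\}$. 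Dually, the $p$-invariant effects are exactly $\{0,\, \bar a_0,\, \discard{}\}$, and $\discard{(A,p)} = \discard{} \circ p = \discard{}$. A short case check now shows there is no pair among these states admitting effects that sum to $\discard{}$ and realise $\delta_{i,j}$: every effect pair summing to $\discard{}$ must contain $\discard{}$ itself, which by positivity annihilates no nonzero state. Thus $(A,p)$ has no perfectly distinguishable states; and as it has three states while $I$ has two and $0$ has one, we also get $(A,p) \not\cong I, 0$.

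Next, suppose for contradiction that $p$ had a causal finite decomposition. By Theorem~\ref{thm:SplitHasBiprod} and the proof of Corollary~\ref{cor:IdemBiprodEquiv}, this means $(A,p) \cong \bigoplus_{i=1}^n (A_i, \id{})$ as a causal biproduct in $\Splitcausal(\catC)$ with each $A_i \in \catC$; discarding zero summands, I may assume each $A_i \not\cong 0$, and since $(A,p) \not\cong I, 0$ the biproduct is nontrivial. The final ingredient is that every nonzero object of $\catC$ has a causal state: if $A_i \cong I$ take the scalar $1$, and otherwise $A_i$ carries a distinguishing state $c$ with $\discard{} \circ c = (\bar c_0 + \bar c_1) \circ c = 1$. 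Using such causal states together with the identity $\discard{} = \sum_i \discard{A_i} \circ \pi_i$ for a causal biproduct, I would construct a perfectly distinguishable pair for $(A,p)$ --- either from two distinct nonzero summands, or, when there is a single summand $A_1 \not\cong I$, by transporting a distinguishable pair of $A_1$ along the isomorphism. This contradicts the previous paragraph, so no such decomposition exists and $\catC$ lacks the finite decomposition property.

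The conceptual core, and the main obstacle, is the invariant computation: establishing that $(A,p)$ genuinely has \emph{no} perfectly distinguishable states requires controlling \emph{all} of its states and effects, not merely their values on $a_0$ and $a_1$. This is exactly what the reduction through the split idempotent $r$ (so that $p$ is an idempotent living inside the copy $(A,r) \cong I \oplus I$) accomplishes, collapsing the verification to a finite Boolean case analysis. By comparison, the remaining steps --- that nontrivial causal biproducts always carry a distinguishable pair, and that a single exotic summand would itself have to be trivial --- are routine consequences of perfect distinguishability and the (causal) biproduct equations.
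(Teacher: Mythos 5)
Your proof is correct, and it takes a genuinely different route from the paper's, even though you land on essentially the same witnessing idempotent (the paper uses $p = a_0 \circ \discard{} + a_1 \circ \bar{a}_1$, which is yours with the roles of $0$ and $1$ swapped). The paper argues at the level of \emph{splittings}: it first notes that $\catC^{\oplus}$ is again a possibilistic theory with perfect distinguishability, so that it suffices to show that causal idempotents fail to split causally in any such theory, and then runs a direct case analysis on the splitting object $B$ (either $B$ has a perfectly distinguishable pair, in which case pulling the effects back along $e$ forces $e \circ a_0 = 0$ and hence $a_0 = 0$ by positivity; or $B \simeq I$, in which case $p$ is rank one and one derives $\bar{a}_1 = \discard{}$; or $B \simeq 0$, contradicting $p \neq 0$). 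You instead attack condition~3 of Corollary~\ref{cor:IdemBiprodEquiv} directly via its biproduct reformulation in $\Splitcausal(\catC)$, classify \emph{all} states and effects of $(A,p)$ by factoring through the auxiliary idempotent $r$, and use ``admits a perfectly distinguishable pair'' (plus a state count) as an isomorphism invariant separating $(A,p)$ from every causal biproduct $\bigoplus_i (A_i,\id{})$. What your approach buys is that it avoids the (unproven, though easy) claim that $\catC^{\oplus}$ inherits possibilisticity and perfect distinguishability, and it yields strictly more information --- the full state/effect structure of the exotic object $(A,p)$; the cost is the extra lemma that every nontrivial causal biproduct of nonzero objects carries a perfectly distinguishable pair, which the paper's case analysis sidesteps. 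Two cosmetic quibbles: your parenthetical that $r$ ``splits over a copy of $I \oplus I$'' should be read in $\Splitcausal(\catC)$ (or $\catC^{\oplus}$), since $\catC$ itself need not contain such an object --- your actual computation only uses $r\sigma = a_0(\bar{a}_0 \sigma) + a_1(\bar{a}_1\sigma)$, so nothing is at stake; and in the single-summand case $A_1 \simeq I$ you should note (as you implicitly do via the state count) that $\pi_1$ is the causal inverse of $\coproj_1$, so the isomorphism transports states bijectively.
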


\begin{example} \label{ex:Rel_Counter}
Both $\Modalp$ and $\Rel$ are possibilistic theories with discarding and perfect distinguishability, and so for these theories the constructions do not coincide.

For example, in $\Rel$, any set $A$ not isomorphic to $I = \{\star\}$ or $0 = \emptyset$ has at least two distinct elements, forming a pair of perfectly distinguishable states. Hence $\Rel$ lacks the finite decomposition property. Concretely, the causal idempotent~\eqref{eq:counterex_proj} is the relation on $\{0,1\}$ given by $0 \mapsto 0$, $1 \mapsto 0,1$, does not split.
\end{example}

\paragraph{Information units}
\johnignore{It is not just possibilistic models for which these two constructions differ,}
A somewhat contrived example can be motivated by considering the idea of an information unit for a theory:
\johnignore{An information unit is}
a particular object $U$ in a theory $\catC$ such that any process in the theory can be `simulated' on an $n$-fold monoidal product of $U$.
\johnignore{In quantum theory such an information unit is provided by the qubit.}
 The existence of an information unit has been used as a postulate in reconstructing quantum theory \cite{masanes2013existence} where $U$ is a qubit, and, moreover, underlies the circuit model of quantum computation. We define the information unit subtheory, $\catCU$, as a full subcategory
 \johnignore{of a theory with an information unit}
 restricting to objects of the form $U^{\otimes n}$.
 \johnignore{ and denote this subcategory as $\catCU$.}

\begin{example}
For the quantum case with $U=\mathbb{C}^2$ we find that
 $\QuantMixed_{\langle U \rangle}^\oplus\subsetneq{\CStar}\simeq\Splitcausal(\QuantMixed_{\langle U \rangle}).$
To see the equivalence note that we can obtain an arbitrary $n$-level quantum system by splitting a causal idempotent, i.e., consider $m$ such that $n\leq M := 2^m$ and consider a sub-causal projector onto an $n$ dimensional subspace then Proposition~\ref{prop:subcausal_idem} shows there is a causal idempotent which splits over the same system. On the other hand is not possible to construct (for example) a qutrit with direct sums or tensor products of qubits and hence the biproduct completion does not give the entirety of $\CStar$. In fact, this result can be seen as a demonstration that the qubit is indeed an information unit for quantum theory as this provides a way to simulate any other quantum system on some composite of qubits.
\end{example}

\section{Idempotents from leaks}\label{sec:leaks}

Idempotents naturally arise, in symmetric monoidal theories with discarding, from
information leakage. \johnignore{Throughout we work in a symmetric monoidal category with discarding $(\catC, \otimes, \discard{})$.} 

\begin{definition}
A \em leak \em on an object $A$ is a morphism $l \colon A \to A \otimes L$ which has discarding as a right counit, that is: $\rho_A \circ (\id{A} \otimes \discard{L}) \circ l = \id{A}$. It follows that all leaks are causal processes.
\seanignore{
\beq\label{eq:leal}
l \colon A\to A\otimes L
\eeq
which has discarding as a right counit, that is:
\[
\rho_A \circ (\id{A}\otimes \discard{L})\circ l =\id{A}
\]
}
\end{definition}

\begin{example} In $\QuantMixed$ all leaks are \emph{constant}: i.e.~of the form $(\id{A}\otimes \sigma) \circ \rho_A^{-1}$
\seanignore{
\[
(\id{A}\otimes \sigma) \circ \rho_A^{-1}
\]
}
for a causal state $\sigma$ on $L$. 
\end{example}
\begin{example}
A \emph{broadcasting map}~\cite{Nobroadcast, CKpaperI} on an object $A$ is a leak $l \colon A \to A \otimes A$, \textit{i.e.} with $L = A$, for which discarding is also a left counit. \johnignore{The function $X \to X \times X$ on a set $X$ with $x \mapsto (x,x)$ lifts to a broadcasting map in} Both of our classical theories $\ClassProb$ and $\Rel$ have such a map.
\end{example}
\begin{remark}
In fact, minimal and maximal leakage characterises quantum and classical theory respectively\johnignore{while maximal leakage characterises classical theory}  \cite{QCILeaks}. Specifically, $\QuantMixed$ and $\ClassProb$ are equivalent to the full subcategories of $\CStar$ on objects with only constant leaks, and on objects with maximal leaks, i.e.~broadcasting maps, respectively.
\end{remark}
It is natural, given any theory $(\catC,\discard{})$, and for each system $A$ a chosen process
$l_A \colon A \to A \otimes L_A$
to construct a new theory $\leakminC$ in which $l_A$ represents the ongoing `leakage' of that system into the environment. The result of this leakage on $A$ would be the process:
\begin{equation}
\label{eq:idemp-leakconstr}
\iota_A:= \rho_A \circ (\id{A}\otimes \discard{L_A})\circ l_A
\end{equation}
Processes $A \to B$ in the new theory should then consist of applying the leakage to the inputs and outputs of every process $f \colon A \to B$ in the original theory, i.e.\johnignore{ to be of the form} $\iota_B\circ f\circ \iota_A$. Demanding the $\iota_A$ be idempotent ensures that $\leakminC$ is a category ---
the full subcategory of $\Splitcausal(\catC)$ given by the objects $(A, \iota_A)$. It is again symmetric monoidal whenever our choice of leaks $l_A$ ensures that $\iota_{A\otimes B}= \iota_A\otimes \iota_B$ for all $A, B$. In this case, $\iota_A$ becomes the identity on $A$, guaranteeing that $l_A$ provides
a leak in the new theory,  $[(\iota_A\otimes \iota_{L_A})\circ l_A \circ \iota_A]$.
\johnignore{
\[
\rho_A \circ (\id{A}\otimes \discard{L_A})\circ[(\iota_A\otimes \iota_{L_A})\circ l_A \circ \iota_A] = \iota_A\circ \iota_A\circ\iota_A = \iota_A
\]
}

Alternately, starting again from $\catC$, if we want to obtain a theory which can describe all possible systems that could arise from some leakage then we must instead take systems to be all possible pairs $(A, l_A)$ with the above property, i.e.~such that $\iota_A$ is idempotent. But this is none other than $\Splitcausal(\catC)$. Indeed, \emph{any} causal idempotent $p \colon A \to A$ is of the form (\ref{eq:idemp-leakconstr}) for some $l_A$, for example by taking $l_A := \rho_A^{-1} \circ p \colon A \to A \otimes I$. A more insightful manner is by taking $l_A$ to be any `purification' of $\iota_A$, that is, any pure process satisfying~\eqref{eq:idemp-leakconstr}. The existence of such a process is guaranteed in quantum theory by the Stinespring dilation theorem, and more generally for theories arising from the $\CPM$-construction \cite{CPer}, or, those satisfying the purification postulate of \cite{Chiri1}.
Hence we have generalised the idea that decoherence results from information-leaking, from the specific case of quantum theory to a much broader class of theories. There is another important conclusion we can draw from idempotents resulting from leaks.

\begin{example}\label{ex:CP*1}
A result of Vicary \cite{VicaryCstar} states that finite dimensional C*-algebras are precisely \em (dagger special) Frobenius structures \em in $\FHilb$: comonoids
$\delta \colon A \to A \otimes A$
which additionally satisfy (together with their `daggers', see Section~\ref{sec:Daggers} below) two equations called the \emph{dagger Frobenius law}: and \emph{speciality}.

\seanignore{
the \emph{dagger Frobenius law}:
\[
(\id{A} \otimes \delta^\dagger)\circ \alpha \circ (\delta\otimes \id{A}) =(\delta^\dagger\otimes \id{A})\circ \alpha^{-1} \circ (\id{A}\otimes\delta)
\]
where $\alpha$ is the appropriate coherence isomorphism, and \emph{speciality}:
\[
\delta^\dagger \circ \delta = \id{A}
\]
}
\end{example}

However, we now know that the much weaker concept of a process
$l \colon A \to A \otimes L$
in the directly physically interpretable category $\QuantMixed$, for which $\iota$ is a causal idempotent as above, is already sufficient to guarantee the rest of this comonoid structure. Hence, one is tempted to deduce that the essential physical structure of C*-algebras is captured by that of a leak.

\section{Comparison with  dagger categorical approaches}\label{sec:Daggers}

The use of biproducts and idempotent splittings to model hybrid quantum-classical systems has in fact already been studied by Selinger~\cite{SelingerIdempotent},
and Heunen, Kissinger and Selinger~\cite{EPTCS171.7}. Crucially, however, these previous works have relied on a feature of quantum theory not present in general physical theories -- the existence of a \emph{dagger}.\seanignore{
Our work bares close similarities with other uses of biproducts and idempotent splittings for the modelling of quantum-classical systems.
In particular, both notions are studied in this context by Selinger in~\cite{SelingerIdempotent}, while Corollary~\ref{cor:quantum_main_result} is a strengthening of an existing result of Heunen, Kissinger and Selinger~\cite{EPTCS171.7}. Crucially, however, these previous approaches have relied on a particular feature of quantum theory not present in general physical theories, namely the existence of a \emph{dagger}.
}
Recall that a \emph{dagger category} $(\catC, \dagger)$ is a category $\catC$ coming with an involutive, identity on objects functor $\dagger \colon \catC^{\op} \to \catC$. A \emph{dagger compact} category is additionally symmetric monoidal, in a way compatible with the dagger, with every object having a \emph{dagger dual} -- see~\cite{AC1,SelingerCPM} for details.
\seanignore{
Then a dagger semi-additive category (dagger monoidal category etc.)~is one coming with a dagger respecting the additive (monoidal, etc.) structure in a straightforward sense. In particular, a \emph{dagger compact} category is a dagger symmetric monoidal category in which each object $A$ comes with a so-called \emph{dagger dual} object $A^*$ -- see~\cite{AC1,SelingerCPM} for details.
}

\begin{examples}
 $\Rel$, $\QuantPure$, $\QuantMixed$, $\FHilb$, $\ClassProb$ and $\CStar$ are all dagger compact categories. In $\Rel$ the dagger is given by relational converse, and in the other cases it extends the adjoint of linear maps.
 \end{examples}


The results of Section~\ref{sec:tworoads} can easily be adapted to include daggers, as follows. A \emph{dagger} idempotent is one $p$ with $p = p^{\dagger}$, a \emph{dagger} splitting $p = m \circ e$ is one with $e = m^{\dagger}$, and a \emph{dagger} disjoint embedding or biproduct is one with $\pi_i = \coproj_i^{\dagger}$  for all $i$. Then $\catD^{\oplus}$ and $\Splitdagcausal(\catD)$, the full subcategory of $\Splitcausal(\catD)$ on the causal dagger idempotents, satisfy the same universal properties as before with respect to dagger-respecting functors, and Corollary~\ref{cor:IdemBiprodEquiv} becomes an equivalence of dagger categories $\catD^{\oplus} \simeq \Splitdagcausal(\catD)$.
\seanignore{
The results of Section~\ref{sec:tworoads} do without daggers, they may be easily be adapted to include them, as follows.
In a dagger category, a \emph{dagger idempotent} $p \colon A \to A$ is an idempotent satisfying $p = p^{\dagger}$, and it has a \emph{dagger} splitting when it splits as $(m, e)$ with $m = e^{\dagger}$. We write $\Splitdagcausal(\catD)$ for the full subcategory of $\Splitcausal(\catD)$ determined by the causal dagger idempotents.
Similarly, a \emph{dagger} disjoint embedding or biproduct $(A, A_i,\coproj_i,\pi_j)$ is one for which $\pi_i = \coproj_i^{\dagger}$ holds, for all $i$.
Then $\catD^{\oplus}$ and $\Splitdagcausal(\catD)$ satisfy their usual universal properties with respect to functors and natural transformations respecting the dagger. Further,
the analogue of Corollary~\ref{cor:IdemBiprodEquiv} is now a dagger-respecting equivalence of dagger categories $\catD^{\oplus} \simeq \Splitdagcausal(\catD)$.
}
\paragraph{The $\CPM$ construction}\label{sec:CPM}

Given a dagger compact category $\catC$ of `pure' processes, we may construct a new one, $\CPM(\catC)$, with the same objects but interpreted as consisting of mixed processes~\cite{SelingerCPM}.
An axiomatization of the construction closely resembling our treatment is provided by the following notion~\cite{SelingerAxiom, CPer}.
An \emph{environment structure} $(\catD, \catD_{\pure}, \discard{})$ consists of a dagger compact category $\catD$ with discarding $\discard{}$ respecting the dagger compact structure, along with a chosen dagger compact subcategory $\catD_{\pure}$ satisfying an axiom relating $\discard{}$ and $\dagger$.
It satisfies \emph{purification} whenever every morphism in $\catD$ is of the form
$\lambda \circ (\discard{} \otimes \id{}) \circ f
$ for some morphism $f$ in $\catD_{\pure}$. In this case there is a (dagger monoidal) isomorphism of categories $
\catD \simeq \CPM(\catD_{\pure})$. Conversely, every $\catD = \CPM(\catC)$ arises in this way.

The key example is that $\QuantMixed$ and $\QuantPure$ form an environment structure with purification, with
$\QuantMixed \simeq \CPM(\QuantPure) \simeq \CPM(\FHilb)$.
\seanignore{
\begin{example}
$\QuantMixed$ and $\QuantPure$ form an environment structure with purification, and we have:
$\QuantMixed \simeq \CPM(\QuantPure) \simeq \CPM(\FHilb)$.
\end{example}
}
We have seen that, while $\FHilb$ has biproducts, $\QuantMixed$ merely has disjoint embeddings. In fact, these suffice to deduce properties of the $\CPM$ construction previously shown using biproducts (cf.~\cite[Theorem 4.5]{SelingerIdempotent}) - see App.~\ref{proof:pos_theory_counter} for a proof.

\begin{proposition} \label{prop:disj_embed_and_CPM}
If $\catC$ is a dagger compact category with zero morphisms and dagger (causal) disjoint embeddings, so is $\CPM(\catC)$.
If $\catC$ is also dagger semi-additive so is $\CPM(\catC)$, and then there is a full embedding
$\CPM(\catC)^{\oplus} \hookrightarrow \Splitdagcausal(\CPM(\catC))$.
\end{proposition}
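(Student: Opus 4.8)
The plan is to transport the disjoint-embedding structure along the canonical doubling functor $\doubleop \colon \catC \to \CPM(\catC)$, and then to establish closure of the completely positive morphisms under addition using the orthogonality built into a dagger disjoint embedding. Recall that $\doubleop$ is the identity on objects, sends a pure morphism $f$ to the CPM morphism $f_* \otimes f$, is functorial, strict monoidal, preserves the dagger (so that $\doubleop(f^\dagger) = \doubleop(f)^\dagger$), and preserves the zero morphisms which $\CPM(\catC)$ inherits from $\catC$.

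For part~1, suppose $\{A_i\}_{i=1}^n$ has a dagger disjoint embedding $(A, \coproj_i, \coproj_i^\dagger)$ in $\catC$. I would take the same object $A$ in $\CPM(\catC)$ together with the injections $\doubleop(\coproj_i)$ and projections $\doubleop(\coproj_i)^\dagger = \doubleop(\coproj_i^\dagger)$. Functoriality and dagger-preservation reduce the disjoint-embedding equations~\eqref{eq:biprod_first} to their counterparts in $\catC$: for $i \ne j$ we get $\doubleop(\coproj_i)^\dagger \circ \doubleop(\coproj_j) = \doubleop(\coproj_i^\dagger \circ \coproj_j) = \doubleop(0) = 0$, while $\doubleop(\coproj_i)^\dagger \circ \doubleop(\coproj_i) = \doubleop(\id{A_i}) = \id{A_i}$. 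The dagger condition $\pi_i = \coproj_i^\dagger$ holds by construction. For causality, note that any dagger disjoint embedding has isometric injections, and the doubling of an isometry is causal in $\CPM(\catC)$: discarding there is the trace given by the compact structure, and tracing through the isometry $\coproj_i$ leaves $\discard{A} \circ \doubleop(\coproj_i) = \discard{A_i}$ precisely because $\coproj_i^\dagger \circ \coproj_i = \id{A_i}$. Hence the doubled data form a causal dagger disjoint embedding.

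For part~2, the addition on a homset of $\CPM(\catC)$ is simply the addition of the representing $\catC$-morphisms, so the commutative-monoid axioms, the bilinearity of composition, and compatibility with the dagger are all inherited from the dagger semi-additive structure of $\catC$; the only point requiring proof is that the completely positive morphisms are closed under this sum. This is the crux. Given CPM morphisms $\Phi, \Psi \colon A \to B$ presented by pure maps $f \colon A \to B \otimes C$ and $g \colon A \to B \otimes D$, I would choose a dagger disjoint embedding $(E, \coproj_C, \coproj_D)$ of the two ancillas $\{C, D\}$ and form the single pure map
\[
h = (\id{B} \otimes \coproj_C) \circ f + (\id{B} \otimes \coproj_D) \circ g \colon A \to B \otimes E.
\]
Expanding $\doubleop(h)$ by bilinearity produces four terms; after discarding $E$ the two diagonal terms reproduce $\Phi$ and $\Psi$, since $\coproj_C^\dagger \circ \coproj_C = \id{C}$ and $\coproj_D^\dagger \circ \coproj_D = \id{D}$, whereas the two cross terms vanish because $\coproj_C^\dagger \circ \coproj_D = 0 = \coproj_D^\dagger \circ \coproj_C$. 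Hence the CPM morphism induced by $h$ equals $\Phi + \Psi$, so the sum is again completely positive; together with the zero morphism (the doubling of $0$) this makes each homset a commutative monoid, and $\CPM(\catC)$ dagger semi-additive.

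Finally, combining the two parts, $\CPM(\catC)$ is a dagger semi-additive category possessing dagger (causal) disjoint embeddings, so the promised full embedding $\CPM(\catC)^\oplus \hookrightarrow \Splitdagcausal(\CPM(\catC))$ follows by applying the dagger analogue of Corollary~\ref{cor:IdemBiprodEquiv} with $\catC$ replaced by $\CPM(\catC)$. The hard part will be the closure of completely positive maps under addition in part~2: realizing the sum as a single CPM morphism using only a disjoint embedding of the ancillas rather than a full biproduct, and checking that orthogonality of the injections annihilates the cross terms. Everything else is bookkeeping carried through the functor $\doubleop$.
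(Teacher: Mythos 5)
Your proposal is correct and follows essentially the same route as the paper: part~1 is the transport of the disjoint-embedding data along the zero-preserving, object-surjective doubling functor, and part~2 is exactly the argument of Selinger's Lemma~4.7(c) (which the paper cites rather than reproduces), with your correct observation that only the orthogonality equations~\eqref{eq:biprod_first} of the ancilla embedding are needed to kill the cross terms, not the full biproduct equation~\eqref{eq:biprod_sum}. The concluding full embedding via the dagger analogue of Corollary~\ref{cor:IdemBiprodEquiv} (equivalently, of Theorem~\ref{thm:SplitHasBiprod}) matches the paper as well.
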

\seanignore{
\begin{proof} 
Use that there is a dagger functor $\catC \to \CPM(\catC)$ which preserves zeroes and is surjective on objects.
When $\catC$ has disjoint embeddings, $\CPM(\catC)$ is closed under addition in $\catC$ -- this is proven just as for biproducts in~\cite[Lemma 4.7(c)]{SelingerCPM}. Finally, use (the dagger version of) Theorem~\ref{thm:SplitHasBiprod}.
\end{proof}
}
\seanignore{
The existence of disjoint embeddings is proven
In \cite[Lemma 4.7(c)]{SelingerCPM} it is established that when $\catD$ has dagger biproducts, $\CPM(\catD)$ is closed under addition in $\catD$. The proof for disjoint embeddings is the same. The final statement uses (the dagger version of) Theorem~\ref{thm:SplitHasBiprod}.
\end{proof}
}

\paragraph{The $\CPs$~construction}
There is another treatment of hybrid classical-quantum systems applicable to  any dagger compact category $\catC$.
Dagger special Frobenius structures in $\catC$ (see Example~\ref{ex:CP*1})  form the objects of a category $\CPs(\catC)$, with the motivating example being that
$\CPs(\FHilb) \simeq \CStar$~\cite{VicaryCstar}.
\seanignore{
result from \cite{VicaryCstar}, extending Example \ref{ex:CP*1}:
\[
\CPs(\FHilb) \simeq \CStar
\]
}

An axiomatisation of the $\CPs$ construction, extending that for $\CPM$, has been given by Cunningham and Heunen~\cite{cunningham2015axiomatizing}. Given an environment structure, observe that for each Frobenius structure $(A,\delta)$
in $\catD_{\pure}$, the morphism
$\lambda_A\circ (\discard{} \otimes \id{}) \circ \delta$
is a causal dagger idempotent. A \emph{decoherence structure} is a choice of dagger splitting for each such idempotent, in a way compatible with the monoidal structure. Immediately, we have the following.


\begin{proposition} \label{prop:decoherence_structure}
If $\catD$ has an environment structure, $\Splitdagcausal(\catD)$ has a canonical decoherence structure.
\end{proposition}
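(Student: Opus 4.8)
The plan is to exploit the defining feature of the construction $\Splitdagcausal(\catD)$: by its very definition it provides a canonical dagger splitting of every causal dagger idempotent of $\catD$. Since, as already observed, each decoherence idempotent $\iota_\delta := \lambda \circ (\discard{} \otimes \id{}) \circ \delta$ attached to a Frobenius structure $\delta$ in $\catD_{\pure}$ is a causal dagger idempotent, the bulk of the work is simply to read off these canonical splittings and to verify that the resulting assignment is compatible with the monoidal product.

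First I would note that $\catD_{\pure} \hookrightarrow \catD \hookrightarrow \Splitdagcausal(\catD)$, so that every Frobenius structure $\delta$ of $\catD_{\pure}$ and its decoherence idempotent $\iota_\delta$ still live in $\Splitdagcausal(\catD)$, which inherits a dagger compact structure with discarding from $\catD$ by the results of Sections~\ref{Sec:MonoidalStructure} and~\ref{sec:Daggers}. For each such $\delta$ on an object $A$, the canonical splitting of $\iota_\delta$ is given by the object $(A,\iota_\delta)$ together with $e := \iota_\delta \colon (A,\id{}) \to (A,\iota_\delta)$ and $m := \iota_\delta \colon (A,\iota_\delta) \to (A,\id{})$. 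One checks directly from the definition of $\Splitdagcausal(\catD)$ that $m \circ e = \iota_\delta$ and $e \circ m = \id{(A,\iota_\delta)}$, and that $m$ and $e$ are well-defined morphisms there. Because $\iota_\delta = \iota_\delta^{\dagger}$ this splitting is a dagger splitting, $m = e^{\dagger}$, and it is causal since $\iota_\delta$ is. This furnishes a canonical dagger splitting for every decoherence idempotent.

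The substantive step is monoidal compatibility. Here I would establish that the decoherence idempotent of a tensor product of Frobenius structures is the tensor of the decoherence idempotents, that is
\[
\iota_{\delta_1 \otimes \delta_2} = \iota_{\delta_1} \otimes \iota_{\delta_2}
\]
up to coherence isomorphisms. This rests on two ingredients: the comultiplication of the product Frobenius structure is $(\id{} \otimes \sigma \otimes \id{}) \circ (\delta_1 \otimes \delta_2)$ with $\sigma$ the symmetry, and discarding respects the monoidal product, $\discard{A_1 \otimes A_2} = \lambda_I \circ (\discard{A_1} \otimes \discard{A_2})$. Combining these and commuting the swaps and coherence isomorphisms past one another yields the displayed identity. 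Since the tensor product in $\Splitdagcausal(\catD)$ is computed by $(A_1,\iota_{\delta_1}) \otimes (A_2,\iota_{\delta_2}) = (A_1 \otimes A_2, \iota_{\delta_1} \otimes \iota_{\delta_2})$, this identity says precisely that the canonical splitting object for $\delta_1 \otimes \delta_2$ is the tensor product of those for $\delta_1$ and $\delta_2$, with the splitting morphisms matching up likewise. Hence $\delta \mapsto \bigl((A,\iota_\delta), e, m\bigr)$ is compatible with the monoidal structure and defines a decoherence structure.

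The main obstacle I anticipate is this last coherence bookkeeping: making $\iota_{\delta_1 \otimes \delta_2} = \iota_{\delta_1} \otimes \iota_{\delta_2}$ precise requires carefully tracking the symmetry and associativity isomorphisms relating the Frobenius comultiplication on $A_1 \otimes A_2$ to $\delta_1 \otimes \delta_2$, and confirming that they agree with the monoidal coherence data of $\Splitdagcausal(\catD)$. Everything else — idempotency, and the dagger and causality of the splitting — is immediate from the Karoubi construction together with the already-noted properties of $\iota_\delta$.
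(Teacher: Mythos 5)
The paper offers no written proof of this proposition --- it is presented as immediate from the definitions --- and your argument is exactly the intended one: the Karoubi envelope canonically dagger-splits each causal dagger idempotent $\iota_\delta$ via the object $(A,\iota_\delta)$ with $e = m = \iota_\delta$, and monoidal compatibility follows from $\iota_{\delta_1\otimes\delta_2} = \iota_{\delta_1}\otimes\iota_{\delta_2}$, which holds because discarding is monoidal. Your writeup correctly supplies the details the paper leaves implicit, so it is both correct and essentially the same approach.
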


Under a mild extra assumption known as \emph{positive dimensionality}, any decoherence structure induces a faithful (causal, dagger) functor \cite{cunningham2015axiomatizing}:
$\CPs(\catD_{\pure}) \to \Splitdagcausal(\catD)$.
Moreover, results of Heunen, Kissinger and Selinger~\cite{EPTCS171.7} tell us that, whenever purification is satisfied, so that $\catD$ is of the form $\CPM(\catC)$ for some $\catC$, this is a full causal embedding, and that when $\catC$ has dagger biproducts the causal embeddings from each construction factor as:
\begin{equation} \label{eq:CPMCPsarrows}
\CPM(\catC)^{\oplus} \hookrightarrow \CPs(\catC) \hookrightarrow \Splitdagcausal(\CPM(\catC)) \hookrightarrow \Splitcausal(\CPM(\catC))
\end{equation}
Then using Corollary~\ref{cor:IdemBiprodEquiv} we have the following:

\begin{corollary}
Let $\catC$ have dagger biproducts, and suppose $\CPM(\catC)$ has the finite decomposition property. Then each of the inclusions~\eqref{eq:CPMCPsarrows} are (causal, monoidal) equivalences of categories.
\end{corollary}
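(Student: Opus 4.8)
The plan is to use the fact that all three inclusions in \eqref{eq:CPMCPsarrows} are full and faithful embeddings, so it suffices to show each is essentially surjective; I would extract this from two applications of Corollary~\ref{cor:IdemBiprodEquiv} (the ordinary one and its dagger analogue) combined with a two-out-of-three argument for equivalences. Write $a, b, c$ for the three inclusions in order, so that the outer composite is $c \circ b \circ a$ and the first two compose to $b \circ a$.

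First I would pin down the two relevant equivalences. Since $\catC$ has dagger biproducts it has dagger causal disjoint embeddings, so by Proposition~\ref{prop:disj_embed_and_CPM} the category $\CPM(\catC)$ is dagger semi-additive and has dagger causal disjoint embeddings. Combined with the hypothesis that $\CPM(\catC)$ has the finite decomposition property, Corollary~\ref{cor:IdemBiprodEquiv} applied to $\CPM(\catC)$ shows that the canonical embedding $\CPM(\catC)^{\oplus} \to \Splitcausal(\CPM(\catC))$ is a causal equivalence; this embedding is exactly the outer composite $c \circ b \circ a$. Likewise the dagger analogue of Corollary~\ref{cor:IdemBiprodEquiv} shows that the canonical dagger embedding $\CPM(\catC)^{\oplus} \to \Splitdagcausal(\CPM(\catC))$, which is the composite $b \circ a$, is a causal, dagger equivalence.

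With these in hand I would run the bookkeeping. The middle embedding $b$ is full and faithful and the composite $b \circ a$ is essentially surjective, so every object of $\Splitdagcausal(\CPM(\catC))$ lies in the essential image of $b$; hence $b$ is essentially surjective, and therefore an equivalence. Since both $b$ and $b \circ a$ are equivalences, $a$ is an equivalence by two-out-of-three. Finally, since both $b \circ a$ and $c \circ b \circ a$ are equivalences, $c$ is an equivalence by the same argument (equivalently, $c$ is full, faithful, and essentially surjective because the outer composite is). Throughout, the functors and the witnessing natural isomorphisms are causal and monoidal, because the embeddings and the equivalences supplied by Corollary~\ref{cor:IdemBiprodEquiv} and Section~\ref{Sec:MonoidalStructure} are; so each inclusion is a causal, monoidal equivalence.

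The step I expect to be delicate is verifying that the composites of the inclusions genuinely coincide, up to natural isomorphism, with the canonical embeddings to which Corollary~\ref{cor:IdemBiprodEquiv} and its dagger analogue apply, and that the dagger form of the hypothesis needed for the dagger equivalence really follows from the stated finite decomposition property of $\CPM(\catC)$ --- that is, that dagger idempotents receive dagger finite decompositions. Once these identifications are secured, the remaining reasoning is purely formal.
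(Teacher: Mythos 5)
Your overall strategy matches the paper's (the paper offers no written proof beyond ``using Corollary~\ref{cor:IdemBiprodEquiv}'', and the intended argument is exactly the one you sketch: the outer composite is the equivalence $F$ of Corollary~\ref{cor:IdemBiprodEquiv}, all three arrows are full embeddings, so all three are equivalences). However, the specific way you establish essential surjectivity of the middle inclusion $b$ has a real gap, which you yourself flag: you invoke the dagger analogue of Corollary~\ref{cor:IdemBiprodEquiv} to get $\CPM(\catC)^{\oplus} \simeq \Splitdagcausal(\CPM(\catC))$, but that analogue needs the \emph{dagger} finite decomposition property (every causal dagger idempotent admits a causal finite decomposition into \emph{dagger}-split pieces), and the corollary's hypothesis only supplies the plain finite decomposition property. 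It is not clear in the abstract that a dagger idempotent's finite decomposition can be chosen dagger, so as written this step does not follow from the stated assumptions.

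The fix is to drop the dagger equivalence entirely. All three inclusions $a, b, c$ are full and faithful (this is the content of the cited factorisation~\eqref{eq:CPMCPsarrows} from \cite{EPTCS171.7} and \cite{cunningham2015axiomatizing}), and a full and faithful functor reflects isomorphisms. So once the plain Corollary~\ref{cor:IdemBiprodEquiv} gives that $c \circ b \circ a$ is essentially surjective, you propagate down the chain: $c$ is essentially surjective immediately; for any $X$ in $\Splitdagcausal(\CPM(\catC))$ you have $c(X) \cong c(b(a(Y)))$ for some $Y$, and since $c$ is full and faithful this isomorphism lifts to $X \cong b(a(Y))$, so $b$ is essentially surjective; the same argument applied to $b$ handles $a$. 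This uses only the stated hypothesis and makes the two-out-of-three bookkeeping you describe go through verbatim, with causality and monoidality carried along exactly as you say.
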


In particular, the case $\catC=\FHilb$ gives equivalences:
\[
\QuantMixed^{\oplus} \simeq \CStar \simeq \Splitdagcausal(\QuantMixed) \simeq \Splitcausal(\QuantMixed)
\]
extending the result from~\cite{EPTCS171.7}.

\section{Conclusion} \label{sec:concl}

We explored two seemingly unrelated categorical representations of classicality, through the $\Splitcausal(\catC)$ and $\catC^{\oplus}$ constructions, finding clear properties ensuring that they coincide -- the existence of disjoint embeddings and the finite decomposition property.
In Corollary~\ref{cor:quantum_main_result} we showed that these hold for quantum theory, strengthening the result of Heunen-Kissinger-Selinger~\cite{EPTCS171.7} by removing all mention of daggers.
This strengthening allows for a clear physical interpretation. In particular, we saw that we may obtain all of the usual C*-algebraic structure simply from the concept of leaking information (Section~\ref{sec:leaks}).

The generality of the approach here leaves many interesting open questions. Firstly, note that the results of Section~\ref{sec:tworoads} do not crucially rely on the discarding structure.
Though the restriction to sub-causal idempotents is well motivated physically, as it is only these that have an interpretation as probabilistic outcomes, an obvious open question is whether it is needed.
Do \emph{all} idempotents in $\CStar$ split?
Also, since these results do not rely on dagger, compact closed or even monoidal structure, they are applicable to the infinite-dimensional case.
Do all (sub-causal) idempotents split (causally) in the category of arbitrary C*-algebras (or von Neumann algebras) and completely positive maps?

Finally, while the existence of disjoint embeddings has a reasonable physical interpretation (see Remark~\ref{rem:dis_embed}), our main result also relied on the finite decomposition property, which is less well understood.
Are there any clear physical principles which allow one to deduce the finite decomposition property?
A positive answer to this would yield a more insightful proof of Corollary~\ref{cor:quantum_main_result}. Conversely, what other interesting physical consequences does this property have?


\appendix


\section{Proof of Lemma \ref{lem:pos_theory_counter} }\label{proof:pos_theory_counter}

In this appendix we prove that any possibilistic theory $(\catC, \discard{})$ with perfect distinguishability lacks the finite decomposition property.

\proof
It is easy to see that the biproduct completion $\catC^{\oplus}$ is again a possibilistic theory satisfying perfect distinguishability, and hence it suffices to show that causal idempotents do not split in $\catC$.
Pick a system $A$ possessing a pair of states $\{\pda_0, \pda_1\}$ perfectly distinguishable by some effects $\{\pdeff{\pda_0}, \pdeff{\pda_1}\}$. Then since $\pdeff{\pda_1} + \discard{} = \discard{}$, the following defines a causal idempotent on $A$:
\begin{equation} \label{eq:counterex_proj}
p =  \pda_0 \circ \discard{} + \pda_1 \circ \pdeff{\pda_1}
\end{equation}
Suppose $p \colon A \to A$ has a splitting $(m,e)$ over an object $B$.
Next suppose $B$ has two perfectly distinguishable states $\pdb_0, \pdb_1$, via the effects $\{\pdeff{\pdb_0}, \pdeff{\pdb_1}\}$. We define new states and effects on $A$ by $\pdc_i = m \circ \pdb_i$ and $\pdeff{\pdc_i} = \pdeff{\pdb_i} \circ e$, respectively, for $i = 1, 2$.
Then:
\[
0 = \pdeff{\pdc_0} \circ \pdc_1 = \pdeff{\pdc_0} \circ p \circ \pdc_1 = (\pdeff{\pdc_0} \circ \pda_0)\circ (\discard{} \circ \pdc_1) + (\pdeff{\pdc_0} \circ \pda_1)\circ(\pdeff{\pda_1} \circ \pdc_1)
\]
Since the scalars are the Booleans, and the $b_i$ and hence $c_i$ are non-zero, in particular we must have $\pdeff{\pdc_0} \circ \pda_0 = 0$. Dually, $\pdeff{\pdc_1} \circ \pda_0  = 0$ holds also.
But then $\discard{} \circ e \circ \pda_0 = (\pdeff{\pdb_0} + \pdeff{\pdb_1}) \circ e \circ \pda_0 = (\pdeff{\pdc_0} + \pdeff{\pdc_1}) \circ \pda_0 = 0$. By positivity, $e \circ \pda_0 = 0$, and hence $\pda_0 = p \circ \pda_0 = m \circ e \circ \pda_0 = 0$, contradicting $\pdeff{\pda_0} \circ \pda_0 = 1$.

We conclude that no such pair of states exists on $B$. Since $p \neq 0$, $B$ cannot be a zero object. Hence we must have $B \simeq I$ and so $p = x \circ y$ for some state $x$ and effect $y$ on $A$, respectively. But then, for any effect $z$ on $A$, since the scalars are the Booleans we must have either $z \circ p = 0$ or $z \circ p = y$. Hence $\pdeff{\pda_1} = \pdeff{\pda_1} \circ p = y = \discard{} \circ p = \discard{}$, and so $\pdeff{\pda_1} \circ \pda_0 = 1$, a contradiction.
\endproof

\section{Proof of Proposition~\ref{prop:disj_embed_and_CPM}} \label{proof:CPM_disj_embed}

Suppose that $\catC$ is dagger compact with zero arrows and disjoint embeddings. There is always a dagger functor $\catC \to \CPM(\catC)$ which preserves zeroes and is surjective on objects, and hence $\CPM(\catC)$ is also. When $\catC$ has disjoint embeddings, $\CPM(\catC)$ is closed under addition in $\catC$ -- this is proven just as for biproducts in~\cite[Lemma 4.7(c)]{SelingerCPM}. Finally, (the dagger version of) Theorem~\ref{thm:SplitHasBiprod} gives an embedding $\CPM(\catC)^{\oplus} \hookrightarrow \Splitdagcausal(\CPM(\catC))$.
. \endproof
\bibliographystyle{eptcs}
\bibliography{main}

\end{document}